\newcommand{\todo}[1]{\textcolor[rgb]{0.00,0.00,0.00}{#1}}
\newcommand{\forget}[1]{}
\newtheorem{lemma}{Lemma}
\newtheorem{define}{Definition}
\newtheorem{theorem}{Theorem}
\newtheorem{corollary}{Corollary}
\newcommand{\njobs}{\eta}
\newcommand{\Tau}{\mathcal{T}}
\newcommand{\critical}{\mathcal{L}}
\newcommand{\resource}{\ell}
\newcommand{\resourceset}{\Theta}
\newcommand{\interfer}{\mathcal{I}}
\newcommand{\sumC}{\mathcal{C}}
\newcommand{\maxL}{\mathcal{L}}
\newcommand{\reduceblack}{\vspace{-2.5mm}}
\newcommand{\reduceblackhalf}{\vspace{-1.25mm}}
\begin{document}
	
	%
	\title{On the Analysis of Parallel Real-Time Tasks with Spin Locks}

	\author{Xu Jiang$^{1,2}$,  Nan Guan$^{2*}$, He Du$^{1,2}$, Weichen Liu$^{3}$, Wang Yi$^{4}$
		~~\\
		~~\\
		$^1$ Northeastern University, China\\
		$^2$ The Hong Kong Polytechnic University, Hong Kong\\
		$^3$ Nanyang Technological University, Singapore\\
		$^4$ Uppsala University, Sweden\\
		\thanks{*Corresponding author: Nan Guan}
	}

	\markboth{Journal of \LaTeX\ Class Files,~Vol.~XX, No.~XX, XX~201X}%
	{Shell \MakeLowercase{\textit{et al.}}: Bare Advanced Demo of IEEEtran.cls for IEEE Computer Society Journals}
	%



	\IEEEtitleabstractindextext{%
		\begin{abstract}
			Locking protocol is an essential component in resource management of real-time systems, which coordinates mutually exclusive accesses to shared resources from different tasks.
Although the design and analysis of locking protocols 
have been intensively studied for \emph{sequential} real-time tasks, there has been little work on this topic for \emph{parallel} real-time tasks. In this paper, we study the analysis of parallel real-time tasks using spin locks
to protect accesses to shared resources in three commonly used request serving orders (unordered, FIFO-order and priority-order). A remarkable feature making our analysis method more accurate is to systematically analyze the blocking time which may delay a task's finishing time, where the impact to the total workload and the longest path length is \emph{jointly} considered, rather than analyzing them \emph{separately} and counting all blocking time as the workload that delays a task's finishing time, as commonly assumed in the state-of-the-art. 


		\end{abstract}
		\begin{IEEEkeywords}
			Real-Time Scheduling, Spin Lock, Parallel tasks, Multi-core.
	\end{IEEEkeywords}}
	
	\maketitle

	\IEEEdisplaynontitleabstractindextext

	%
	\IEEEpeerreviewmaketitle


	%
	%
	%
	%
	%
	%

	\section{Introduction}

Real-time systems are playing a more important role in our daily life as computing is closely integrated to the physical world. Violating timing constraints in such systems may lead to catastrophic consequences such as loss of human life. Therefore, real-time systems must manage resource in a way such that timing correctness can be guaranteed. Locking protocol is an essential component in resource management of real-time systems, which coordinates mutually exclusive accesses to shared physical/logical resources by different tasks. Inappropriate design or incorrect analysis of locking protocols will lead to incorrect system timing behavior, e.g., as in the famous software failure accident in Mars Pathfinder \cite{jones1997really}.

Multi-cores are becoming mainstream hardware platforms for real-time systems, to meet their rapidly increasing requirements in high performance and low power consumption. To fully utilize the processing capacity of multi-cores, software should be parallelized. While locking protocols for \emph{sequential} real-time task systems have been intensively studied in classical real-time scheduling theory \cite{baker1991stack,sha1990priority,brandenburg2010optimality,rajkumar1990real}, there is little work on this topic for \emph{parallel} real-time tasks. 
On the other hand, there has been much work on scheduling algorithms and analysis techniques for parallel real time tasks \cite{li2014analysis,maia2014response,jiang2016decomposition}, where tasks are  assumed to be independent from each other and the locking issue is not considered.
 
Recently, spin locks were studied for parallel real-time
tasks in \cite{dinh2018blocking} where each parallel task is scheduled exclusively on several pre-assigned processors (i.e., by the \emph{federated} scheduling approach \cite{li2014analysis}). 
However, the analysis in \cite{dinh2018blocking} is pessimistic. The contribution of our work is to develop new techniques for the schedulability analysis of real-time parallel tasks with spin locks and significantly improve the analysis precision against the state-of-the-art.

Both \cite{dinh2018blocking} 
and our work only require  knowledge of the total \todo{worst-case execution} time (WCET) $\sumC_i$ and longest path length $\maxL_i$ of each task, but not the exact  
graph structure (the benefits of only using the abstract $\sumC_i$ and $\maxL_i$ information in the analysis will be discussed in Section \ref{ss: remark}).
In \cite{dinh2018blocking}'s analysis, \emph{all} blocking time caused by spin locks is considered to contribute to the workload that delays the finishing time of a parallel task, which is added to $\sumC_i$ and $\maxL_i$ in their worst-case scenarios \emph{separately}. This is quite pessimistic since many blocking time can \emph{not} delay the finishing time of a parallel task due to the parallelism and intra-dependencies. Moreover, the worst-case scenario leading to the maximal increase to $\sumC_i$ is in general different from the worst-case scenario leading to the maximal increase to $\maxL_i$.

To solve these problems, in this work we first develop new schedulability analysis techniques for parallel tasks with spin locks, where the blocking time contributing to the workload that may delay a task's finishing time is systematically defined and analyzed. Further, we develop blocking analysis techniques for three common request serving orders, i.e., unordered, FIFO-order and priority-order, where the impact to $\sumC_i$ and $\maxL_i$ is \emph{jointly} considered thus achieving higher analysis precision.

We conduct experiments to evaluate the precision improvement using our new techniques compared with \cite{dinh2018blocking}, with both randomly generated tasks and workload generated according to realistic OpenMP programs. Experimental results show that our techniques consistently outperform \cite{dinh2018blocking} under different settings.

%
%
%
%
%
%

%
%

%
%
%
%
%
%

	\section{Preliminary}\label{s:pre}
\subsection{Task Model}

We consider a task set $\Tau$ consisting of several periodic DAG tasks $\Tau=\{\tau_1, \tau_2,..., \tau_{|\Tau|}\}$ to be executed on $m$ processors. A task $\tau_i$ has a \emph{period} $T_i$, a \emph{relative deadline} $D_i$ and 
 a workload structure modeled by a Directed Acyclic Graph (DAG) $G_i = \langle V_i, E_i \rangle$, where $V_i$ is the set of vertices and $E_i$ is the set of edges in $G_i$. 
 Tasks have \emph{constrained} deadlines, i.e., $D_i \leq T_i$.
  Each vertex $v \in V_i$ is characterized by a worst-case execution time (WCET) $c(v)$. We use $\sumC_i$ to denote the total WCET of all vertices of $\tau_i$: $\sumC_i=\sum_{v\in V_i}c(v)$.
The \emph{utilization}
of task $\tau_i$ is $U_i = \sumC_i/T_i$ and the \emph{density} of task $\tau_i$ is $\Gamma_i=\sumC_i/D_i$. In this paper, we only consider DAG tasks with $\Gamma_i > 1$, as those with $\Gamma_i \leq 1$ can be executed sequentially and handled by existing techniques for sequential real-time tasks.
  
Each edge $(u, v) \in E_i$ represents the precedence relation between vertices $u$ and $v$, where $u$ is a \emph{predecessor} of $v$, and $v$ is a \emph{successor} of $u$. We assume each DAG has a unique head vertex (with no predecessors) and a unique tail vertex (with no successors). This assumption does not limit the expressiveness of our model since one can always add a dummy head/tail vertex to a DAG having multiple entry/exit points. A \emph{complete path} in a DAG task is a sequence of vertices $\pi=\{v_1,v_2,...,v_p\} $, where
the first element $v_1$ is the head vertex of $G_i$, the last element 
$v_p$ is the tail vertex of $G_i$, 
and $v_{j}$ is a predecessor of $v_{j+1}$ for each pair of consecutive elements $v_{j}$ and $v_{j+1}$ in $\pi$. The length of each path $\pi$ is $len(\pi) =  \sum_{v \in \pi} c(v)$. 
We use $\critical_i$ to denote the longest length among all paths in $G_i$: $\critical_i = \max_{\pi \in G_i} \{len(\pi) \}$.
Task $\tau_i$ generates a potentially infinite sequence of jobs, which inherit $\tau_i$'s DAG workload structure $G_i$. 
Let $J$ be a job released by $\tau_i$, then we use $r(J)$ to denote $J$'s release time and $f(J)$ to denote $J$'s finish time.
The \emph{absolute deadline} of $J$ is calculated by $r(J)+D_i$.
At runtime, we say a vertex (of a job $J$) is \textit{eligible} at some time point if all its predecessors (of the same job $J$) have been finished and thus it can immediately execute if there are available processors. Fig. \ref{f:DAG} shows a DAG task example $\tau_i$ with $7$ vertices, where $\sumC_i = 10$ 
and $\maxL_i = 5$ (the longest path is $\{v_1, v_4, v_6 , v_7\}$ or $\{v_1, v_2, v_6 , v_7\}$). 
%
%
\begin{figure}
	\centering
	\includegraphics[width=2.2in]{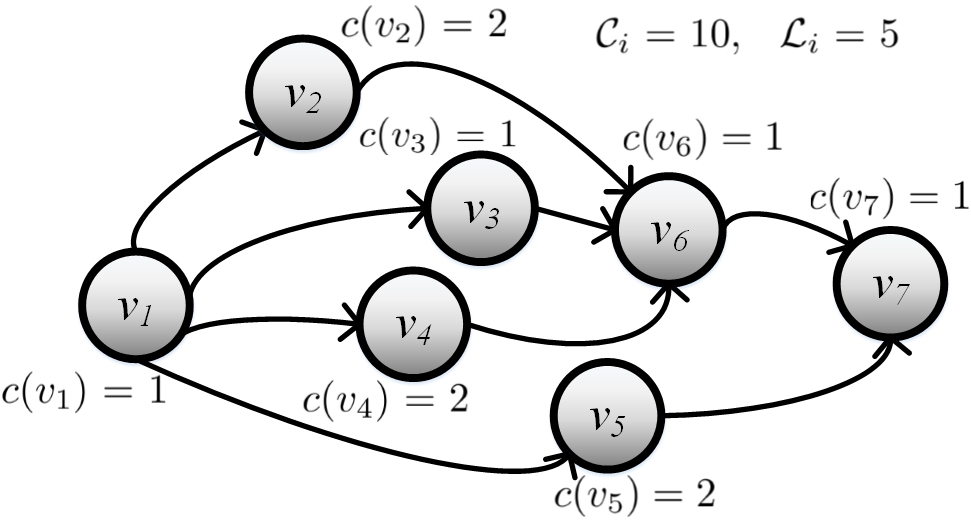}
	\reduceblack
	\caption{An example of a DAG task $\tau_i$.}
	\label{f:DAG}
\end{figure}

\subsection{Resource and Lock Model}
There is a limited set of \emph{serially-reusable shared resources} (called \emph{resources} for short) $\Theta = \{\resource_1, \resource_2, ... , \resource_{|\Theta|}\}$ in the system, \todo{such as I/O ports, network links, message buffers, or other shared data structures.}
Resources are protected by \emph{spin locks}, i.e., 
the program must \emph{acquire}, \emph{hold} and
\emph{release} the lock affiliated to $\resource_q$ before, during and after executing the code segment accessing $\resource_q$.
We assume  
the code segment wrapped by a pair of lock acquisition and lock release does not cross different vertices.
A vertex must execute \emph{non-preemptively} when it is holding a lock.
When a vertex acquires a lock affiliated to $\resource_q$ being {held} by other vertices (either from the same task or from other tasks), the acquiring vertex must spin \emph{non-preemptively} until it successfully obtains the lock,
and we say this vertex is \emph{spinning for} $\resource_q$.

 When multiple vertices are spinning for the same resource at the same time, we consider three kinds of order in which their requests will be served: unordered, FIFO-order and priority-order. In priority-order, each task is assigned a unique priority and all requests from vertices of a same task have the same priority. Note that the priorities are only used to decide the order when requests from different tasks to a resource are served. 

A vertex may access different shared resources and thus hold different locks. However, we assume the locks are \emph{non-nested}, i.e., a vertex 
never acquires another lock when holding a lock.
We use $\Theta_i$ to denote the set of resources accessed by vertices of task $\tau_i$.

%
%
%

The worst-case time of each \emph{single} access to $\resource_q$ by task $\tau_i$
	(i.e., the  maximal duration for
	a vertex in $\tau_i$ to hold the lock affiliated to $\resource_q$ {once}) is denoted by $L_{i,q}$, and the worst-case number of accesses to $\resource_q$ by $\tau_i$ is denoted by $N_{i,q}$. Note that a vertex's WCET includes the resource access time. On the contrary, the time spent by a vertex on \emph{spinning for} some resource, called \emph{blocking time} \cite{block2007flexible,wieder2013spin}, is not included in the WCET estimation.

 \begin{table}[!htb]\caption{Notations adopted in this paper. }
 	\centering
 	\scriptsize
 	\begin{tabular}{ c@{}|c@{}}
 		\hline		
 		Notations & Descriptions \\
 		\hline
 		$\tau_i$& a DAG task \\
 		\hline
 		$G_i$& the workload structure of $\tau_i$ \\
 		\hline
 		$V_i$& the set of vertices in $G_i$ \\
 		\hline
 		$E_i$& the set of edges of $G_i$ \\
 		\hline
 		$c(v)$& WCET of a vertex $v$ \\
 		\hline
 		$\sumC_i$& total WCET of all vertices of $\tau_i$ \\
 		\hline
 		$\pi$& a path \\
 		\hline
 		$\lambda$& a key path \\
 		\hline
 		$len(\pi)$& the total WCET of all vertices on $\pi$ \\
 		\hline
 		$\critical_i$& the longest length among all path of $\tau_i$ \\
 		\hline
 		$\resource_q$& a shared resource\\
 		\hline
 		$N_{i,q}$& number of accesses to $\resource_q$ from $\tau_i$\\
 		\hline
 		$L_{i,q}$& the worst-case time of each single access to $\resource_q$ by $\tau_i$ \\
 		\hline
 		
 		$W_{i}$& working  time of a job of $\tau_i$\\
 		\hline
 		$\Gamma_{i}$& idle time of a job of $\tau_i$\\
 		\hline
 		$B_{i}$& blocking time of a job of $\tau_i$\\
 		\hline 
 		$B_{i}^{\lambda,I}$& intra-task key path blocking time of a job of $\tau_i$\\
 		\hline	
 		$B_{i}^{\lambda,O}$& inter-task key path blocking time of a job of $\tau_i$\\
 		\hline
 		$B_{i}^{\overline{\lambda},I}$& intra-task delay blocking time of a job of $\tau_i$\\
 		\hline
 		$B_{i}^{\overline{\lambda},O}$& inter-task delay blocking time of a job of $\tau_i$\\
 		\hline
 		$B_{i}^{\widetilde{\lambda},I}$& intra-task parallel blocking time of a job of $\tau_i$\\
 		\hline
 		$B_{i}^{\widetilde{\lambda},O}$& inter-task parallel blocking time of a job of $\tau_i$ \\
 		\hline
 		$\interfer_i$& defined in Lemma \ref{t:tightbound}\\
 		\hline
 		$\interfer_{i,q}^I$& defined in (\ref{eq:iiqi}) \\
 		\hline
 		$\interfer_{i,q}^O$& defined in (\ref{eq:iiqo})\\
 		\hline
 		$	\njobs_{i,j}^q$& defined in (\ref{eq:nijq})\\
 		\hline
 		
 		$\varDelta_{i,j}^q$& defined in (\ref{eq: maximumdelaynumber})\\
 		\hline	
 		$R_i$& worst-case response time of $\tau_i$\\
 		\hline
 	\end{tabular}
 	\label{tb:notation}
 \end{table}

\subsection{Scheduling Model}
There are in total $m$ processors in the system, which will be partitioned into several subsets and each subset
is assigned to a task. We use $m_i$ to denote the number of processors assigned to task $\tau_i$. At runtime, $\tau_i$ is scheduled 
by a \emph{work-conserving scheduling} algorithm \cite{li2014analysis}
exclusively on these $m_i$ processors. 
Note that although a task $\tau_i$ executes exclusively on its own $m_i$ processors, its timing behavior is still interfered by other tasks due to the contention on the shared resources. The response time $R(J)$ of a job $J$ is  $R(J) = f(J) - r(J)$, and the \emph{worst-case response time} (\todo{WCRT}) $R_i$ of task $\tau_i$ is the maximum $R(J)$ among all its released jobs $J$. Task $\tau_i$ is \emph{schedulable}
if $R_i \leq D_i$. The problem to solve in this paper is how to partition the $m$ processors to each task such that it is guaranteed to be schedulable.


\subsection{Remark} \label{ss: remark}

The analysis techniques of this paper only require the knowledge of  $\sumC_i$ and $\maxL_i$ of each task $\tau_i$, as well as $N_{i,q}$ and $L_{i,q}$ for each pair of task $\tau_i$ and resource $\resource_q$.
It is not required to know the exact graph structure
of the task, neither the exact distribution of the resource access requests within the task.
This makes our analysis techniques general, in the sense that they are directly applicable to more expressive models, e.g., the conditional DAG model, as long as 
we still can obtain the $\sumC_i$, $\maxL_i$, $N_{i,q}$, $L_{i,q}$ information. Moreover, as pointed out by \cite{dinh2018blocking}, parallel programs are often data-dependent and their internal graph structures usually can only be unfolded at run time, so the exact graph structure of a parallel task can vary from one release to the next. Therefore, the analysis techniques using abstract 
information are more practical than those relying on 
exact graph structure information. 

Of course if one can model the resource access behavior in a more 
detailed manner, e.g., giving the exact worst-case duration of each access and the information about which resource is accessed by which part of the task at which time point, 
it will certainly lead to more precise results in general. However, in practice it is not always possible to model realistic systems with those detailed information due to the flexibility and non-determinism of software behavior. Study on finer-grained resource access models
and the corresponding analysis techniques is left as our future work.

\todo{It is necessary to mention that the main scope of this paper is to present blocking and schedulability analysis techniques when scheduling DAG tasks with spin locks. We do not make any constraint to the scheduler that each paralleled program is scheduled with, as long as the work-conserving is satisfied (e.g., EDF). A limitation in this paper is that we assume locks to be non-nested. In fact, the analysis of nested locks is more complicated even for sequential tasks, and this problem is still vastly open \cite{brandenburg2011scheduling}. However, when nested-locks are used in practice, we can adopt some techniques such as group locks \cite{brandenburg2011scheduling} to transform nested-locks into independent locks such that techniques presented in this paper are still applicable.}


\begin{figure}[htb]
	\centering
	\subfigure[An example of a DAG task.]{\includegraphics[width=1.8in]{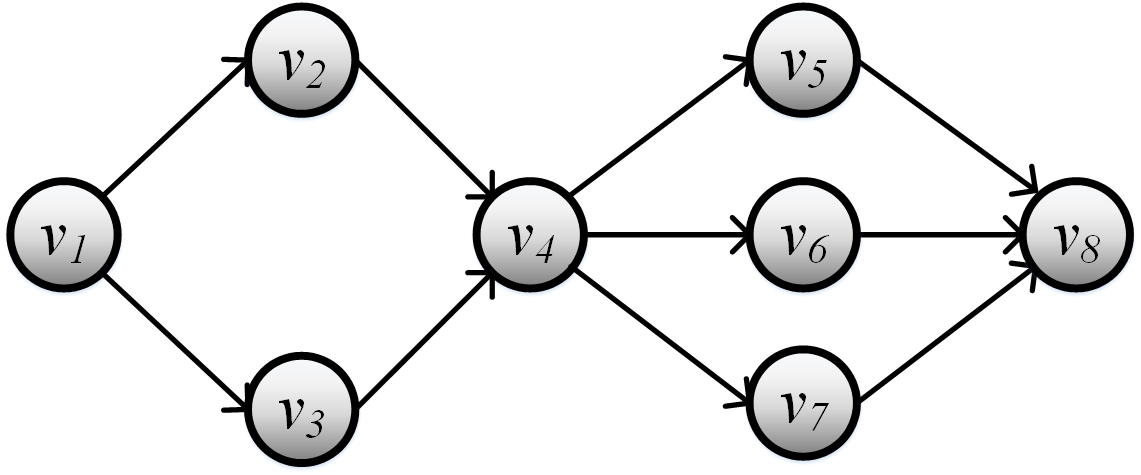}}
	\subfigure[A possible sequence.]{\includegraphics[width=1.7in]{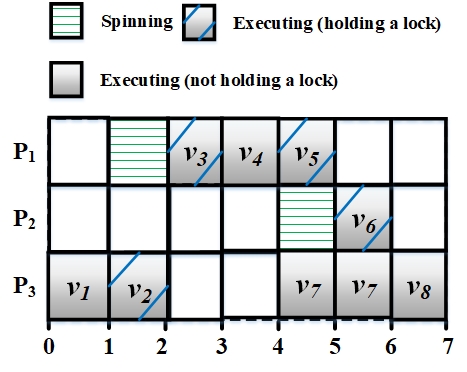}}
	\caption{An example of blocking behavior of a DAG job.}
	\label{fig:diff}
\end{figure}

\section{Discussion of existing techniques } \label{s: sequentialdiscussion}
There has been significant work of locking protocols and blocking analysis for sequential tasks (see Section \ref{s:related} for more details). However, it is not a proper choice to directly apply blocking analysis techniques for sequential tasks on DAG tasks. 


First, the definition of \emph{{blocking}} for DAG tasks is different than that under sequential tasks. Under sequential task models, the blocking time of each task is analyzed individually, and the exact definition of blocking time as well as the blocking analysis techniques are developed according to some particular schedulability tests \cite{block2007flexible,brandenburg2010optimality,brandenburg2013omlp} where the blocking time can be accounted in. The main object of locking protocols (with blocking analysis) is to bound such maximum blocking (e.g., the priority inversion blocking \cite{brandenburg2011scheduling,brandenburg2013omlp}) to an individual task. However, this is not the case for DAG tasks where the schedulability analysis object is the whole DAG task. For example, the DAG task in Figure \ref{fig:diff}.(a) has 8 vertices where $c(v_7)=2$ and each of other vertices has a WCET of 1. \todo{$v_2$, $v_3$, $v_5$ and $v_6$ need to access a same shared resource $\resource_q$ for 1 time unit, while the remaining vertices do not need to access any share resource.} A possible execution sequence of a job of $\tau_i$ is shown in Figure \ref{fig:diff}.(b) where $P_1$, $P_2$ and $P_3$ denote the processors. It can be observed that $v_2$ can not be blocked by $v_3$ if $v_2$ blocks $v_3$ in a DAG job (which is also the case for sequential tasks but each vertex must be analyzed individually in a worst-case blocking scenario). Moreover, although $v_6$ is blocked by $v_5$, the finishing time of the DAG job is not delayed. The reason is that the impact of blocking time on the schedualbility of a DAG job is actually reflected by its impact on the
progress of a particular path, i.e., $\{v_1, v_3, v_4, v_7, v_8\}$ in Figure \ref{fig:diff}.(b). These are quite different than that under sequential task models where the blocking time of each task is analyzed individually. To develop blocking analysis for DAG tasks, we first need to systematically define the notion of blocking and analyze which blocking should be accounted according to its influence on the timing behavior of a DAG task. 

Second, as discussed in Section \ref{ss: remark}, the exact distribution of resource access requests is not known under the model considered in this paper. Therefore, it is impossible to directly apply blocking analysis techniques for sequential tasks on the task model considered in this paper. There may also be cases where the exact graph structure and more concrete information about the resource access. In this case, one may utilize such concrete information and use sequential locking protocols to perform blocking analysis. We will evaluate the performance when directly applying OMLP and its associated blocking analysis techniques on DAG tasks in Section \ref{s:evaluate} to validate the problems discussed in this section. 
%
%
%

%
%


	\section{Preparation}\label{s:pre}

In this section, we introduce some useful concepts and present schedulability analysis techniques for parallel tasks that are applicable irrelevant of the locking protocols and request serving orders. Then in the next section we will apply these results to 
develop specific blocking analysis techniques for  
 unordered, FIFO- and priority- request serving orders, respectively.


When we say a vertex is \emph{executing}, it may be either holding or not holding a lock.
We say a processor is \emph{busy} if some vertex is executing or spinning on this processor, and say a processor is \emph{busy with a vertex $v$} if 
vertex $v$ is executing or spinning on this processor.
A processor is said to be \emph{idle} if it is not busy.

Let $J_{i}$ denote an arbitrary job of $\tau_i$, which is released at $r(J_i)$ and finished at $f(J_i)$. 
The total amount of time spent on $m_i$ processors
assigned to $\tau_i$
during $[r(J_i), f(J_i))$ is $m_i \cdot (f(J_i) - r(J_i) )$, which can be divided into three disjoint parts $m_i \cdot (f(J_i) - r(J_i) ) = B_i + W_i + \Gamma_i$:
\begin{itemize}
	\item \textbf{Blocking {Time} $B_{i}$}: the cumulative length of time on $m_i$ processors spent on spinning.
	
	\item \textbf{Working Time $W_{i}$}: the cumulative length of time on $m_i$ processors spent on executing workload of $J_{i}$ (either holding a lock or not).
	
	\item \textbf{Idle Time $\Gamma_{i}$}: the cumulative length of time on $m_i$ processors being idle. 	
\end{itemize}

\begin{figure}[htb]
	\centering
	\includegraphics[width=2.2in]{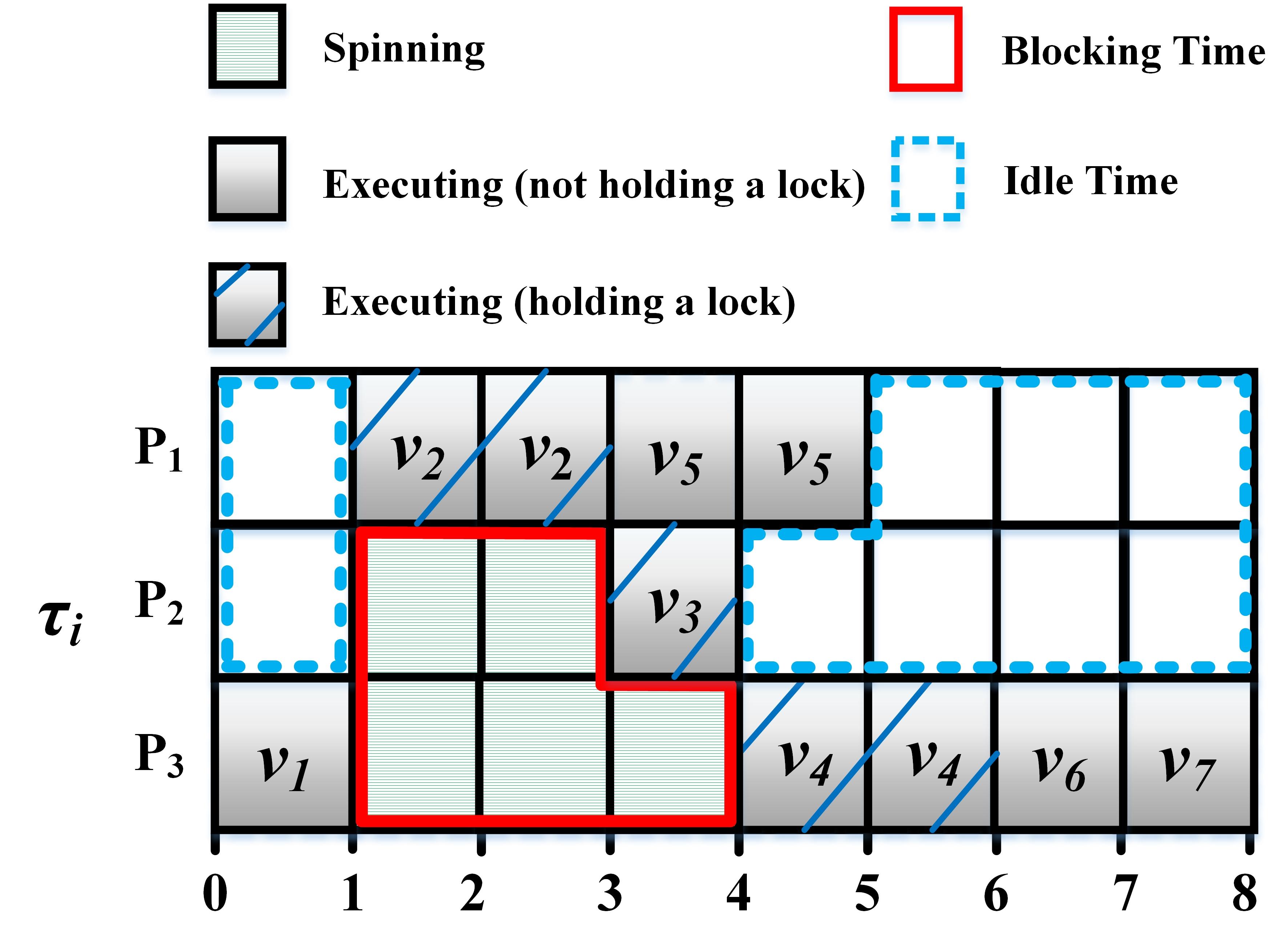}
	\caption{Illustration of different times.}
	\label{f:timetype}
\end{figure}

Fig. \ref{f:timetype} shows a possible scheduling sequence of a job of the task
in Fig. \ref{f:DAG} on $3$ processors, with release time $0$ and finish time $8$.  \todo{Suppose that $v_2$, $v_3$, $v_4$ need to access the same shared resource $\resource_q$ for 1 time unit, while the remaining vertices do not need to access any share resource.}
The blocking time is $5$ (the area wrapped by red solid lines), the idle time is $9$ (the area wrapped by blue dash lines), and the working time
is $10$ (the remaining area between $[0,8)$
on all the $3$ processors).


Given $m_i$ processors assigned to task $\tau_i$, we have:
\begin{lemma}\label{l:responsetime}
$\tau_i$'s worst-case response time $R_{i}$ is bounded by:
\begin{equation}
R_{i} \leq \frac{B_{i}+\Gamma_{i}+\sumC_i}{m_i}.
\label{eq:responestime}
\end{equation}	
\end{lemma}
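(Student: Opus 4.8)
The plan is to derive the bound directly from the processor-time accounting identity set up just before the lemma, applied to the job that realizes the worst-case response time, and the only real work is to bound the working-time term $W_i$ by $\sumC_i$.

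First I would fix a job $J_i$ of $\tau_i$ with $R(J_i) = f(J_i) - r(J_i) = R_i$; such a job exists by the definition of the WCRT (and if the worst case is only approached in the limit, the bound follows by a standard limiting argument). Over the window $[r(J_i), f(J_i))$, whose length is $R_i$, the $m_i$ processors assigned exclusively to $\tau_i$ contribute exactly $m_i \cdot (f(J_i) - r(J_i)) = m_i R_i$ processor-time. By the three-way partition of every processor-instant into spinning, executing workload of $J_i$, and idle, this equals $B_i + W_i + \Gamma_i$, reading $B_i$, $W_i$, $\Gamma_i$ as the blocking, working, and idle time accumulated by $J_i$ itself (as defined in the bulleted list preceding the lemma).

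Next I would argue $W_i \le \sumC_i$: the working time counts, summed over the $m_i$ processors, the time actually spent running vertices of $J_i$, and each vertex $v \in V_i$ executes for at most its WCET $c(v)$, so $W_i \le \sum_{v \in V_i} c(v) = \sumC_i$. Substituting into the identity gives $m_i R_i = B_i + W_i + \Gamma_i \le B_i + \Gamma_i + \sumC_i$, which is exactly the claimed inequality after dividing by $m_i$. If in the statement $B_i$ and $\Gamma_i$ are intended as upper bounds valid for every job of $\tau_i$ rather than the specific quantities of $J_i$, this only weakens the right-hand side, so the bound still holds.

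The only delicate point — and the place I would be careful — is the implicit assumption behind the partition that, throughout $[r(J_i), f(J_i))$, the $m_i$ processors are never executing a \emph{different} job of $\tau_i$ (a carry-in or carry-out instance); otherwise such time fits none of the three categories. This is where the constrained-deadline assumption $D_i \le T_i$ and the exclusive per-task processor assignment are used: under the schedulability regime being analyzed, consecutive jobs of $\tau_i$ do not overlap, so the decomposition $m_i R_i = B_i + W_i + \Gamma_i$ and the bound $W_i \le \sumC_i$ are clean. Beyond pinning this down, the proof is just the bookkeeping already established in the excerpt.
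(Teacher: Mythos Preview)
Your proposal is correct and follows essentially the same route as the paper: use the processor-time identity $m_i(f(J_i)-r(J_i)) = B_i + W_i + \Gamma_i$, bound $W_i \le \sumC_i$, and divide by $m_i$. The only cosmetic difference is that the paper argues for an arbitrary job $J_i$ and then passes to the worst case, whereas you select a worst-case job up front; your extra remark about non-overlapping jobs under constrained deadlines is a valid sanity check that the paper leaves implicit.
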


\begin{proof}
The response time of $J_i$ is $ f(J_i) - r(J_i)$. By
$m_i \cdot (f(J_i) - r(J_i)) = B_i + W_i + \Gamma_i$ and $W_i \leq \sumC_i$, we know $J_i$'s response time is bounded by $\frac{B_{i}+\sumC_i+\Gamma_{i}}{m_i}$.
Since $J_i$ is an arbitrary job of $\tau_i$, $R_i$ is also bounded by $\frac{B_{i}+\sumC_i+\Gamma_{i}}{m_i}$.
\end{proof}
By Lemma \ref{l:responsetime}, 
the problem of bounding $R_i$  boils down to bounding $B_i + \Gamma_i$. Before going further into the analysis, we first introduce the concept of \emph{key path}:

\begin{define}[\textbf{Key Path}]
	A \emph{key path} of job $J_{i}$, denoted by $\lambda = \{v_1,v_2,...v_p\}$, is a complete path in $G_i$, s.t., $\forall j: 1 < j \leq p$, $v_{j-1}$ is a predecessor of $v_{j}$ with the \emph{latest} finish time among all predecessors of $v_{j}$.
	\label{de:keypath}
\end{define}  

\begin{lemma} 
	Let $\lambda = \{v_1,v_2,...v_p\}$ be a key path of $J_i$. All $m_i$ processors must be busy at any time point in $[r(J_i), f(J_i))$ when no processor is 
busy with vertices in $\lambda$. 
	\label{l:keypath}
\end{lemma}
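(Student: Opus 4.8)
The plan is to argue by contradiction. Fix any time point $t \in [r(J_i), f(J_i))$ at which no processor is busy with a vertex of $\lambda$, and suppose that some processor is idle at $t$. I will exhibit a vertex of $\lambda$ that is eligible at $t$ but running on no processor, which contradicts the work-conserving property of the scheduler; hence the supposed idle processor cannot exist.

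The first step is to locate the relevant vertex of $\lambda$. Since $G_i$ has a unique tail $v_p$ and every non-tail vertex has a successor, repeatedly following successors from any vertex must (the graph being a finite DAG) terminate at $v_p$; hence every vertex of $J_i$ other than $v_p$ is a predecessor (direct or transitive) of $v_p$, so $v_p$ is the last vertex of $J_i$ to finish and its finish time equals $f(J_i)$. In particular $v_p$ has not finished by $t$, so we may let $v_j$ be the vertex of $\lambda$ with the smallest index that has not finished by $t$; then $v_1,\dots,v_{j-1}$ have all finished by $t$ while $v_j$ has not.

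The second step shows $v_j$ is eligible at $t$. If $j=1$ this is immediate since $v_1$ is the head vertex, which has no predecessors and is eligible from $r(J_i)\le t$ onward. If $j>1$, then by Definition~\ref{de:keypath} $v_{j-1}$ is the predecessor of $v_j$ with the \emph{latest} finish time; since $v_{j-1}$ finished by $t$, every predecessor of $v_j$ also finished by $t$, so $v_j$ is eligible at $t$. In either case, $v_j$ is an eligible, not-yet-finished vertex of $\lambda$.

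The third step closes the argument. By the choice of $t$, no processor is busy with any vertex of $\lambda$, so $v_j$ is neither executing nor spinning at $t$; combined with the previous step, $v_j$ is an eligible vertex occupying no processor at $t$. A work-conserving scheduler never leaves a processor idle while such a vertex exists, contradicting the assumption that some processor is idle at $t$; hence all $m_i$ processors are busy at $t$. I expect the only delicate points to be (i) the claim that the tail $v_p$ finishes exactly at $f(J_i)$, and (ii) the boundary conventions (what it means for a predecessor to finish "by $t$", and the case $t=r(J_i)$), both of which are settled by the "finished by $t$ / eligible at $t$" conventions used above and do not affect the substance of the argument.
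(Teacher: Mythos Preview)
Your proof is correct and follows essentially the same approach as the paper: both arguments identify, at any time when no vertex of $\lambda$ is running, a vertex of $\lambda$ that is eligible (using that $v_{j-1}$ has the latest finish time among predecessors of $v_j$), and then invoke work-conserving to conclude all processors are busy. The paper phrases this by looking at the gap between the finish of $v_j$ and the start of $v_{j+1}$, whereas you fix an arbitrary time $t$ and argue by contradiction; your version is somewhat more explicit about the boundary cases ($j=1$ and the fact that $v_p$ finishes last), but the substance is identical.
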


\begin{proof}
Let $v_{j}$ and $v_{j+1}$ be two successive elements in $\lambda$. 
	 By Definition \ref{de:keypath}, all predecessors of $v_{j+1}$ have finished at the finish time of $v_{j}$ (and thus $v_{j+1}$ is eligible for execution at that time point). 
	Therefore, all processors must be busy 
	between the finish time of $v_{j}$ and the starting time of $v_{j+1}$. Applying the above reasoning to each pair of successive elements in $\lambda$, the lemma is proved.
\end{proof}


In the following, we divide the blocking time $B_i$ into several disjoint parts.
There are two dimensions to divide 
$B_i$. First, we can divide 
$B_i$ into:

\begin{itemize}
	\item \textbf{Key Path Blocking Time} $B_{i}^{{\lambda}}$, the cumulative length of time spent on spinning by a vertex in $\lambda$. 
	
	\item \textbf{Delay Blocking Time} $B_{i}^{\overline{\lambda}}$, the cumulative length of time 
	on all $m_i$ processors spent on spinning during all the subintervals in  $[r(J_i), f(J_i))$ when \emph{no} processor is
	busy with a vertex in $\lambda$. 
	
	\item \textbf{Parallel Blocking Time}
	$B_{i}^{\widetilde{\lambda}}$, the cumulative length of time on all other $m_i-1$ processors spent on spinning 
	during all the subintervals in  $[r(J_i), f(J_i))$ when \emph{one} processor is
	busy with a vertex in $\lambda$.
\end{itemize}

In the second dimension we divide $B_i$
according to whether the processor is waiting for a resource locked by the \emph{same} task or by a
\emph{different} task:
 \begin{itemize}
 	\item \textbf{Intra-task Blocking Time}, the cumulative length of time spent on spinning and waiting for a resource locked by the \emph{same} task,
 	\item  \textbf{Inter-task Blocking Time},
 	 the cumulative length of time spent on
 	 spinning and waiting for a resource locked by \emph{other} tasks,
 \end{itemize} 
so each of 
$B_{i}^{{\lambda}}$, $B_{i}^{\overline{\lambda}}$ and $ B_{i}^{\widetilde{\lambda}}$ can be further divided into:
\[B_{i}^{{\lambda}} = B_{i}^{{\lambda}, I} + B_{i}^{{\lambda}, O};~ B_{i}^{\overline{\lambda}} = B_{i}^{\overline{\lambda}, I} + B_{i}^{\overline{\lambda}, O};~B_{i}^{\widetilde{\lambda}} = 
B_{i}^{\widetilde{\lambda}, I} + B_{i}^{\widetilde{\lambda}, O}\]
where the superscript $I$ denotes \emph{intra}-task blocking time and $O$ denotes \emph{inter}-task blocking time. 
Finally, $B_i$ can be divided into the following six disjoint parts:
\begin{equation}
B_i=B_{i}^{\lambda,I}+B_{i}^{\lambda,O}+B_{i}^{\overline{\lambda},I}+ B_{i}^{\overline{\lambda},O} +B_{i}^{\widetilde{\lambda},I}+B_{i}^{\widetilde{\lambda},O}.
\label{eq:totalblock}
\end{equation}

\begin{figure}[htb]
	\centering
	\includegraphics[width=2.5in]{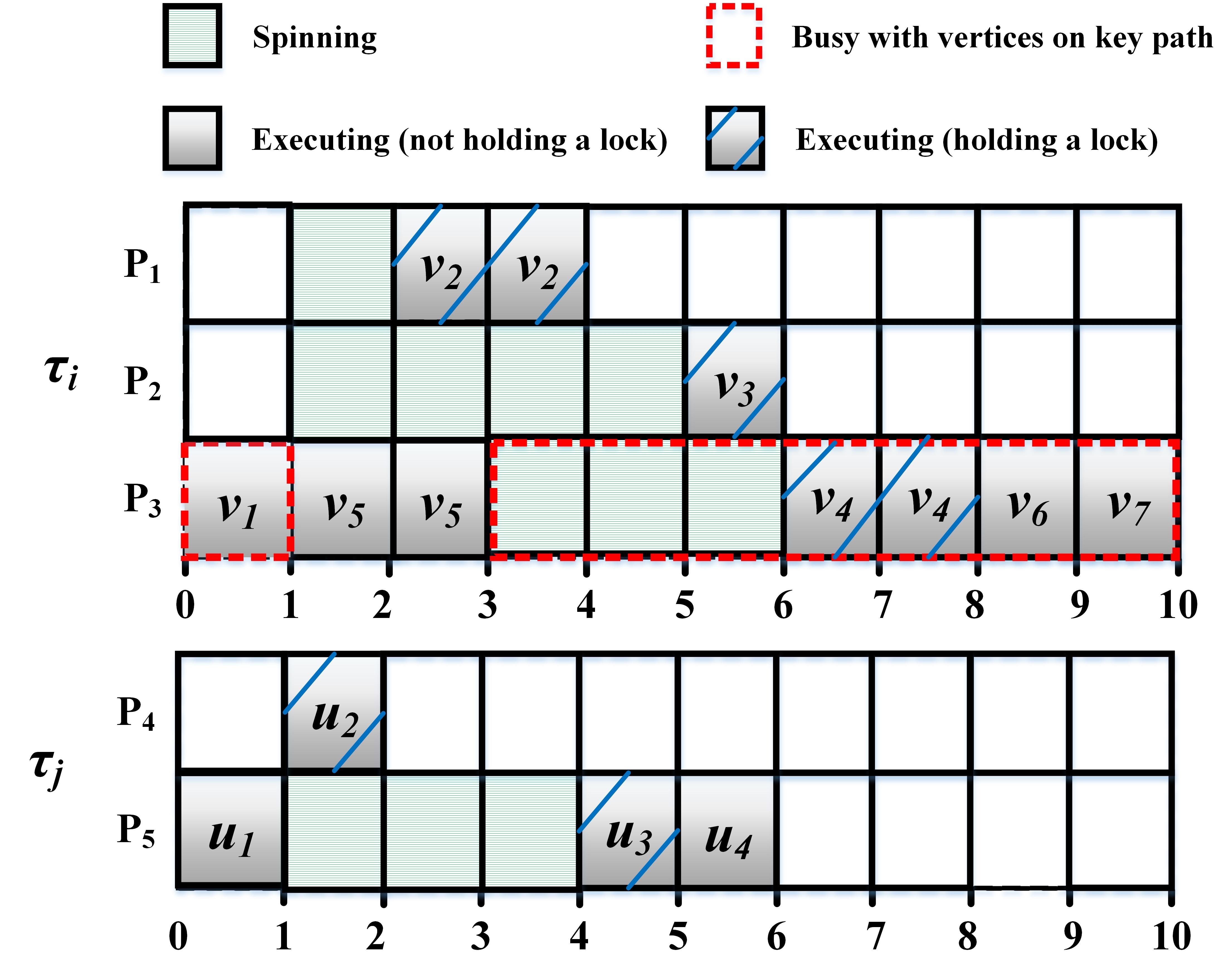}
	\caption{An example of blocking time. }
	\label{f:blocking}
\end{figure}

We use the example in Fig. \ref{f:blocking}
to demonstrate different types of blocking time. Suppose the upper part  of the figure is a running sequence of a job of the task $\tau_i$ in Fig. \ref{f:DAG}.
Suppose its key path is $\lambda =\{v_1, v_4, v_6, v_7\}$.
The lower part in Fig. \ref{f:blocking} is a running sequence of a job of another task $\tau_j$ with $V_j = \{u_1, u_2, u_3, u_4\}$ and $E_j = \{(u_1, u_2), (u_1, u_3), (u_2, u_4), (u_3, u_4)\}$.
All vertices of $\tau_j$ have the same WCET of $1$.
Entire vertices $v_2$, $v_3$ and $v_4$ in $\tau_i$ and $u_2$, $u_3$ in $\tau_j$ access the same shared resource.
The blocks wrapped by the red dash lines represent that 
a vertex in the key path is executing or spinning.
In this example, $B_i$ is divided into the six disjoint parts as follows:
	\begin{itemize}
		\item $B_{i}^{\lambda,I} = 2$, which includes $[3,4)$ and $[5,6)$ on $P_3$,
		\item $B_{i}^{\lambda,O} = 1$ which includes $[4,5)$ on $P_3$,
		\item $B_{i}^{\overline{\lambda},I} = 1$, which includes $[2,3)$ on $P_2$,
		\item $B_{i}^{\overline{\lambda},O} = 2$, which includes $[1,2)$ on both $P_1$ and $P_2$,
		\item $B_{i}^{\widetilde{\lambda},I} = 1$, which includes $[3,4)$ on $P_2$,
		\item $B_{i}^{\widetilde{\lambda},O} = 1$, which includes $[4,5)$ on $P_2$.
	\end{itemize}




\begin{lemma} \label{t:tightbound}
	The response time of $J_i$ is upper bounded by:
	\begin{equation}
	R(J_i)\leq \frac{\sumC_i+(m_i-1)\cdot \maxL_i +\interfer_{i}}{m_i}
	\label{eq:upperboundofr}
	\end{equation}
where	
$\interfer_{i}=(m_i-1)\cdot B_{i}^{\lambda,I}+B_{i}^{\overline{\lambda},I}+m_i \cdot  B_{i}^{\lambda,O}+B_{i}^{\overline{\lambda},O}$.
	
\end{lemma}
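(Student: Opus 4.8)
I would start from Lemma~\ref{l:responsetime}, which tells us $R(J_i) \leq (B_i + \Gamma_i + \sumC_i)/m_i$, and then bound $\Gamma_i$ and reorganize the six blocking components. The core idea is to use the key path $\lambda$ as the "backbone'': by Lemma~\ref{l:keypath}, at any instant in $[r(J_i), f(J_i))$ when no processor is busy with a vertex of $\lambda$, all $m_i$ processors are busy (executing workload, holding/spinning on locks). So idle time $\Gamma_i$ can only accumulate during subintervals when exactly one processor \emph{is} busy with a vertex of $\lambda$, and on those subintervals at most $m_i - 1$ processors are idle. The total length of such subintervals is at most $\maxL_i$ (since the key path is a complete path and $len(\lambda) \leq \maxL_i$; here I'd need that a vertex of $\lambda$ is "busy'' — executing or spinning — for a total duration that I can bound, which requires folding the key-path spinning time $B_i^{\lambda}$ into the accounting). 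This gives a bound of the shape $\Gamma_i \leq (m_i - 1)(\maxL_i + B_i^{\lambda,I} + B_i^{\lambda,O})$ or similar, because the "one processor busy with $\lambda$'' intervals have total length $\leq \maxL_i + B_i^{\lambda}$.

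\textbf{Key steps, in order.} First, partition $[r(J_i), f(J_i))$ into the set $\mathcal{A}$ of instants where some processor is busy with a $\lambda$-vertex, and its complement $\mathcal{B}$. On $\mathcal{B}$, Lemma~\ref{l:keypath} gives zero idle time, and the spinning that occurs there is exactly $B_i^{\overline{\lambda}} = B_i^{\overline{\lambda},I} + B_i^{\overline{\lambda},O}$. Second, bound $|\mathcal{A}|$: during these intervals one processor runs a $\lambda$-vertex either executing (total $\leq \maxL_i$ across the whole path) or spinning (total $= B_i^{\lambda,I} + B_i^{\lambda,O}$), so $|\mathcal{A}| \leq \maxL_i + B_i^{\lambda,I} + B_i^{\lambda,O}$. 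Third, on $\mathcal{A}$ the idle time is at most $(m_i-1)|\mathcal{A}|$ and the parallel spinning on the other $m_i-1$ processors is $B_i^{\widetilde{\lambda}}$. Fourth, assemble: plug $W_i \leq \sumC_i$, the bound on $\Gamma_i$, and $B_i = B_i^{\lambda,I}+B_i^{\lambda,O}+B_i^{\overline{\lambda},I}+B_i^{\overline{\lambda},O}+B_i^{\widetilde{\lambda},I}+B_i^{\widetilde{\lambda},O}$ into $m_i R(J_i) \leq B_i + W_i + \Gamma_i$. Fifth, collect coefficients: terms living on all $m_i$ processors during $\mathcal{A}$-type reasoning pick up a factor $m_i$ or $m_i - 1$, while $\overline{\lambda}$-terms that sit inside the "all busy'' intervals are already counted inside $\sumC_i + (m_i-1)\maxL_i$ accounting and should appear with coefficient $1$ only where they are genuine extra workload. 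The arithmetic should telescope so that $B_i^{\widetilde{\lambda},I}$ and $B_i^{\widetilde{\lambda},O}$ disappear from the final bound (they are absorbed into the $(m_i-1)\maxL_i$ slack, since parallel blocking happens on the $m_i - 1$ processors while $\lambda$ progresses), leaving exactly $\interfer_i = (m_i-1)B_i^{\lambda,I} + B_i^{\overline{\lambda},I} + m_i B_i^{\lambda,O} + B_i^{\overline{\lambda},O}$.

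\textbf{Main obstacle.} The delicate point is the asymmetry in the coefficients: why $B_i^{\lambda,I}$ gets $(m_i - 1)$ but $B_i^{\lambda,O}$ gets $m_i$, and why the $\overline{\lambda}$ terms get coefficient $1$ while $\widetilde{\lambda}$ terms vanish. I expect this comes from carefully deciding, for each type of blocking interval, how many processor-units of "budget'' (out of the $m_i \cdot R(J_i)$ total) it consumes versus how much of that is already paid for by $\sumC_i$ or by the $(m_i-1)\maxL_i$ term. Intra-task spinning on $\lambda$ means a $\lambda$-vertex is waiting for a lock held by another vertex \emph{of the same job}; that holder is itself doing useful work counted in $\sumC_i$, so only the $m_i - 1$ non-$\lambda$ processors' time during that interval is "lost'' beyond what $\sumC_i$ covers — hence coefficient $m_i - 1$ after dividing. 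Inter-task spinning on $\lambda$ has no such internal compensation — all $m_i$ processors are effectively stalled relative to the job's own progress — hence $m_i$. Getting this bookkeeping airtight, and making sure no processor-time is double-counted between the $W_i$, the $\maxL_i$ slack, and the six blocking buckets, is the crux; I would handle it by writing the inequality $m_i R(J_i) = B_i + W_i + \Gamma_i$ and substituting the interval-wise bounds one subinterval type at a time.
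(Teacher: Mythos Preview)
Your proposal is correct and follows essentially the same route as the paper: partition $[r(J_i),f(J_i))$ by whether a processor is busy with a $\lambda$-vertex, bound $\Gamma_i$ by $(m_i-1)\cdot len^*$ with $len^* = len(\lambda)+B_i^{\lambda,I}+B_i^{\lambda,O}$, then sharpen by subtracting $B_i^{\widetilde{\lambda}}$ (parallel blocking is not idle) and $B_i^{\lambda,I}$ (the intra-task lock holder occupies one of the other $m_i-1$ processors), and finally add $B_i$ back and collect coefficients. Your intuition for the $(m_i-1)$ versus $m_i$ asymmetry---that during intra-task key-path spinning the holder is a vertex of $J_i$ itself and hence one non-$\lambda$ processor is guaranteed non-idle---is exactly the paper's argument, just phrased in terms of ``covered by $\sumC_i$'' rather than ``subtracted from $\Gamma_i$''; the arithmetic is identical.
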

\begin{proof}
We start by deriving an upper bound for $\Gamma_i$.
We use $len^*$ to denote sum of lengths of subintervals in $[r(J_i), f(J_i))$ during which a processor is busy with a vertex in $\lambda$ (i.e., a vertex in $\lambda$ is either executing or spinning). 
By Lemma \ref{l:keypath}, we know a processor can be idle only in these subintervals on $m_i - 1$ processors, so $\Gamma_i$ is bounded by $len^* \cdot (m_i - 1) $. 
Moreover, the area $len^* \cdot (m_i - 1) $ may not completely be idle time.
Some vertex may be executing/spinning in parallel with the execution/spinning of vertices in the key path $\lambda$, which can be excluded from $len^* \cdot (m_i - 1) $ 
to get a tighter upper bound for $\Gamma_i$.
In particular, we can 
subtract the following blocking time from $len^* \cdot (m_i - 1) $ to still safely bound $\Gamma_i$:
\begin {itemize}
\item  The 
parallel blocking time $B_{i}^{\widetilde{\lambda}} = B_{i}^{\widetilde{\lambda},I}+B_{i}^{\widetilde{\lambda},O}$. This type of blocking time occurs in parallel with the execution/spinning of vertices in $\lambda$, which can be excluded from the area 
$len^* \cdot (m_i - 1) $.

\item The intra-task key path blocking time $B_{i}^{{\lambda},I}$.  When some vertex in $\lambda$
is experiencing intra-task blocking, there must be a vertex in the same task $\tau_i$ holding the corresponding lock, so the same amount of time 
as $B_{i}^{{\lambda},I}$ should be excluded from the area $len^* \cdot (m_i - 1) $.
%
\end{itemize}
By the above discussion, we can get
\begin{equation}\label{e:tightbound-1}
\Gamma_i \leq len^* \cdot (m_i - 1) - B_{i}^{{\lambda},I} - B_{i}^{\widetilde{\lambda},I}-B_{i}^{\widetilde{\lambda},O}
\end{equation}
(An example illustrating the upper bound for $\Gamma_i$ is provided after the proof.)

On the other hand, we know $len^*$ is the sum of $len(\lambda)$ and the total amount of time when some vertex in $\lambda$ is spinning (for a resource held by either the same task or a different task), i.e.,
\begin{equation}\label{e:tightbound-2}
len^* = len(\lambda) + B_{i}^{{\lambda},I}+B_{i}^{{\lambda},O}
\end{equation}

By (\ref{e:tightbound-1}) and (\ref{e:tightbound-2}) we have:
	\begin{align*}
	& \Gamma_{i} \leq (m_i\!\!-\!\!2)\cdot B_{i}^{\lambda,I}\!\!+\!\!(m_i\!-\!1)\cdot(B_{i}^{\lambda,O}\!\!+\!len(\lambda))\!\!-\!\!(B_{i}^{\widetilde{\lambda},I}\!\!+\!\!B_{i}^{\widetilde{\lambda},O}) \\
	&	~~~~ \Rightarrow ~~~~ B_i+\Gamma_i\leq (m_i-1)\cdot len(\lambda)+\interfer_{i} ~~~~ (\text{by (\ref{eq:totalblock})})
	\end{align*}
and by Lemma \ref{l:responsetime} the lemma is proved. 
\end{proof}

Now we give the intuition of the upper bound for $\Gamma_i$ in the above proof.   
For example, as shown in Fig. \ref{f:blocking}, a job of $\tau_i$ is released at time 0 and finished at time 10. During intervals $[0,1)$ and $[3,10)$, a vertex in the key path $\lambda=\{v_1,v_4,v_6,v_7\}$ is either executing or spinning. 
A processor can be idle only in these two time intervals. 
Since $len(\lambda) = 5$, $B_{i}^{{\lambda},I} = 2$ and 
$B_{i}^{{\lambda},O} = 1$, 
so $len^*=5+ 2 + 1 = 8$, which equals the sum of the length of intervals $[0,1)$ and $[3,10)$.
Therefore, the gross upper bound for $\Gamma_i$
that counts the total area in all the time intervals on all the processors in parallel with the execution/spinning of vertices in $\lambda$ is
 $len^* \times (m_i - 1) = 16$.
In the following we show that part of this total area can be excluded to bound $\Gamma_i$.
$[3,4)$ and $[5,6)$ on $P_3$ are the intra-task key path blocking time.  $P_1$ is holding the lock in 
$[3,4)$ and $P_2$ is holding the lock in $[5,6)$, 
so we can subtract $2$ units when counting the idle time. 
$P_2$ is spinning during $[3,5)$ ($[3,4)$ is intra-task parallel blocking time and $[4,5)$ is inter-task parallel blocking time), 
so we can subtract another $2$ units when counting the idle time. 
 In summary, the idle time $\Gamma_i$ is bounded by $\Gamma_i\leq 16-2-2=12$.  
%
%
%



From Lemma \ref{t:tightbound}, the \emph{parallel blocking time} does \emph{not} contribute to the total work that may delay the finishing time of a parallel task, and the analysis is now boiled down to bounding $\interfer_i$ constituted by key path blocking time and delay blocking time. 

\section {Blocking Analysis} \label{s:firstmodel}
By adopting results we presented in Section \ref{s:pre}, in the following we develop blocking analysis techniques for three request serving orders. We define the  contribution to $\interfer_i$ by each individual resource $\resource_q$, caused by intra- and inter-task blocking, respectively: 
\begin{align}
\interfer_{i,q}^I & =(m_i-1)B_{i,q}^{\lambda,I}+B_{i,q}^{\overline{\lambda},I} \label{eq:iiqi}\\
\interfer_{i,q}^O & =m_i B_{i,q}^{\lambda,O}+B_{i,q}^{\overline{\lambda},O} \label{eq:iiqo}
\end{align}

We use $B_{i,q}^{\lambda,O,j}$ and $B_{i,q}^{\overline{\lambda},O,j}$ to denote the inter-task key path blocking time and delay blocking time on $\resource_q$ of $\tau_i$ caused by requests from task $\tau_j$ respectively where $\tau_j \neq \tau_i$, and we have:
\begin{align*}
B_{i,q}^{\lambda,O}=\sum_{j \neq i} B_{i,q}^{\lambda,O,j} \\
B_{i,q}^{\overline{\lambda},O}= \sum_{j \neq i}B_{i,q}^{\overline{\lambda},O,j}
\end{align*}
Then we divide the contribution to $\interfer_{i,q}^O$ by each individual task $\tau_j \neq \tau_i$:

\[
\interfer_{i,q}^O=\sum_{j\neq i} \interfer_{i,q}^{O,j}=\sum_{j\neq i}\left(m_i B_{i,q}^{\lambda,O,j}+B_{i,q}^{\overline{\lambda},O,j}\right).
\]

Then $\interfer_i$ can be written as 	
\begin{equation}
\interfer_{i} = \sum_{ \resource_q \in \Theta_i}  \left(\interfer_{i,q}^I + \interfer_{i,q}^O \right)= \sum_{ \resource_q \in \Theta_i}  \left(\interfer_{i,q}^I + \sum_{j \neq i}\interfer_{i,q}^{O,j} \right)
\label{eq:totalinterfering}
\end{equation}

We use $x$ to denote the number of accesses 
to resource $\resource_q$ by vertices in the key path $\lambda$. We know $x$ is in the scope $[0, N_{i,q}]$, but do not know its exact value. 
We define $\interfer_{i,q}^I(x)$ and $\interfer_{i,q}^O(x)$ as the parameterized versions of $\interfer_{i,q}^I$ and $\interfer_{i,q}^O$ with respect to $x$ respectively, then 
\[
\interfer_{i} \leq \sum_{ \resource_q \in \Theta_i} \max_{x \in [0, N_{i,q}]} \left(\interfer_{i,q}^I(x)+\interfer_{i,q}^O(x)\right)
\].

In the following, for different access polices we bound $\interfer_{i,q}^I(x)$ and $\interfer_{i,q}^O(x)$ with a particular $x$, with which we then bound $\interfer_i$. 

\subsection{Unordered}


 We first develop analysis techniques that are applicable without distinguishing the specific order in which requests are served.   


\begin{lemma} \label{l: interferingintra}
$\interfer_{i,q}^I \leq (m_i-1) (N_{i,q} - x)  L_{i,q}$.
\end{lemma}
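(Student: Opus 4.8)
The plan is to bound the two constituents of $\interfer_{i,q}^I$, namely the intra-task key path blocking time $B_{i,q}^{\lambda,I}$ and the intra-task delay blocking time $B_{i,q}^{\overline\lambda,I}$, by separately counting the resource accesses of $\tau_i$ to $\resource_q$ that are \emph{not} on the key path. The starting observation is that intra-task blocking on $\resource_q$ can occur only while some other vertex \emph{of the same job} holds the lock on $\resource_q$; since locks are non-nested and a vertex holding a lock executes non-preemptively, each of the $N_{i,q}$ accesses of $\tau_i$ to $\resource_q$ holds the lock for at most $L_{i,q}$ time, and during that single hold at most one spinning vertex can be waiting on any one of $\tau_i$'s $m_i$ processors — in fact, because $\tau_i$'s vertices respect the DAG precedence constraints, a vertex cannot be blocked by a vertex that (transitively) depends on it, so the accesses line up in a way that lets us charge blocking to the holder.

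First I would argue that the total intra-task blocking time on $\resource_q$ experienced by vertices of $J_i$, summed over all $m_i$ processors, is at most $(m_i - 1) N_{i,q} L_{i,q}$ if we count \emph{all} $N_{i,q}$ accesses: each lock-hold episode (there are at most $N_{i,q}$ of them, each of length $\le L_{i,q}$) can force spinning on at most the other $m_i - 1$ processors simultaneously. Next I would refine this for the two pieces that make up $\interfer_{i,q}^I$. For the key path blocking time $B_{i,q}^{\lambda,I}$: the $x$ accesses lying on $\lambda$ never contribute, because a key-path access holding the lock blocks only non-key-path vertices, and an access on $\lambda$ being blocked must be blocked by a vertex \emph{not} on $\lambda$ holding the lock — so $B_{i,q}^{\lambda,I}$ is driven by the at most $N_{i,q} - x$ non-key-path accesses, giving $B_{i,q}^{\lambda,I} \le (N_{i,q} - x) L_{i,q}$. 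For the delay blocking time $B_{i,q}^{\overline\lambda,I}$: by definition this is counted over subintervals where \emph{no} processor is busy with a key-path vertex; in such an interval the holder of $\resource_q$ is necessarily a non-key-path access, and the spinning it induces is also on non-key-path vertices, so again only the $N_{i,q} - x$ off-path accesses matter. Combining these via the definition $\interfer_{i,q}^I = (m_i-1) B_{i,q}^{\lambda,I} + B_{i,q}^{\overline\lambda,I}$ and being careful not to double-count — the same lock-hold episode cannot simultaneously create key-path blocking on $m_i - 1$ processors \emph{and} delay blocking, and in the delay regime the key path is entirely absent so only $m_i - 1$ processors can be spinning — yields $\interfer_{i,q}^I \le (m_i - 1)(N_{i,q} - x) L_{i,q}$.

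The main obstacle I anticipate is the bookkeeping that prevents double-counting across the two terms $(m_i-1) B_{i,q}^{\lambda,I}$ and $B_{i,q}^{\overline\lambda,I}$: a naive bound would give $(m_i - 1)(N_{i,q}-x)L_{i,q} + (N_{i,q}-x)L_{i,q} = m_i (N_{i,q}-x) L_{i,q}$, which is weaker than claimed. The sharpening requires observing that each of the $N_{i,q} - x$ off-path lock-hold episodes contributes to \emph{at most one} of the two regimes: either the episode overlaps an interval where a key-path vertex is busy (then it can block up to $m_i - 1$ other processors and is charged to $B_{i,q}^{\lambda,I}$ with weight $m_i - 1$), or it falls entirely in a "no key-path vertex busy" interval (then at most $m_i - 1$ processors can be spinning on it, since Lemma~\ref{l:keypath} forces the remaining processors to be busy, but that busyness is not key-path work, so it is charged to $B_{i,q}^{\overline\lambda,I}$ with weight $1$) — and $\max\{m_i - 1, 1\} \cdot (N_{i,q}-x) L_{i,q} = (m_i - 1)(N_{i,q}-x) L_{i,q}$ since $m_i \ge 2$ for tasks with $\Gamma_i > 1$. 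Making this charging argument airtight — in particular handling an episode that straddles the boundary of a key-path-busy interval — is where the care is needed; the rest is definitional.
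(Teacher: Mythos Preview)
Your approach is correct and is essentially the paper's argument: only the $N_{i,q}-x$ off-path accesses to $\resource_q$ can generate intra-task key-path or delay blocking, and each unit of their lock-holding time contributes at most $m_i-1$ to $\interfer_{i,q}^I$ regardless of which regime it falls in. The paper dissolves your straddling worry by partitioning at the level of lock-holding \emph{time} rather than whole episodes---it writes the total off-path access time as $X+Y$ where $X$ is the portion during which a key-path vertex is spinning for $\resource_q$ (so $B_{i,q}^{\lambda,I}=X$) and $Y$ is the rest (so $B_{i,q}^{\overline\lambda,I}\le (m_i-1)Y$, since one processor holds the lock), giving $(m_i-1)X+(m_i-1)Y\le (m_i-1)(N_{i,q}-x)L_{i,q}$ directly.
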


\begin{proof}
%
%
The total access time to resource $\resource_q$ by vertices of $J_i$
\emph{not} in the key path $\lambda$ is at most $(N_{i,q} - x)  L_{i,q}$ which can be divided into two disjoint parts, i.e., $(N_{i,q} - x)  L_{i,q} = X + Y$, where
	\begin{itemize}
		\item $X$ is the total access time to resource $\resource_q$ that causes 
		key path blocking. We know \begin{equation}\label{e: interferingintra-1}
	B_{i,q}^{\lambda, I} = X 
		\end{equation}
		
		\item $Y$ is the total access time to  
$\resource_q$ that does \emph{not} cause key path blocking.
By definition, key path blocking 
and delay blocking cannot happen at the same time.
Therefore, any lock holding time that causes 
intra-task delay blocking must be included in $Y$.
Each time unit in $Y$ can cause at most $(m_i - 1)$
intra-task delay blocking time (one processor is holding the lock and at most $m_i - 1$ processors are spinning). In summary, the intra-task delay blocking is bounded by \begin{equation}\label{e: interferingintra-2}
B_{i,q}^{\overline{\lambda}, I} \leq (m_i - 1) Y
\end{equation}
\end{itemize}
By (\ref{e: interferingintra-1}) and (\ref{e: interferingintra-2}) we have


\begin{align*}
	\interfer_{i,q}^I  &= (m_i -  1) B_{i,q}^{\lambda, I} + B_{i,q}^{\overline{\lambda}, I} \leq (m_i - 1)(X + Y)
	B_{i,q}^{\overline{\lambda},I} \\ 
	&=(m_i \!-\! 1)(N_{i,q} - x)  L_{i,q} 
\end{align*}

The lemma is proved.
\end{proof}

Lemma \ref{l: interferingintra} directly implies:
\begin{corollary}\label{c:inter}
$\interfer_{i,q}^I \leq (m_i-1)N_{i,q}L_{i,q}$
\end{corollary}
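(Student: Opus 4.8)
The plan is to obtain this as an immediate weakening of Lemma~\ref{l: interferingintra}. That lemma gives $\interfer_{i,q}^I \leq (m_i-1)(N_{i,q} - x)L_{i,q}$, where $x$ is the number of accesses to $\resource_q$ by vertices in the key path $\lambda$. The only facts I would invoke are that $x$ lies in $[0, N_{i,q}]$ (established in the paragraph preceding the Unordered subsection), so in particular $x \geq 0$ and hence $N_{i,q} - x \leq N_{i,q}$, and that the multiplying factor $(m_i-1)L_{i,q}$ is nonnegative (since $m_i \geq 1$ and $L_{i,q} \geq 0$), so multiplying the inequality $N_{i,q}-x \leq N_{i,q}$ by this factor preserves its direction. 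Chaining, $\interfer_{i,q}^I \leq (m_i-1)(N_{i,q}-x)L_{i,q} \leq (m_i-1)N_{i,q}L_{i,q}$, which is exactly the claim.

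There is essentially no obstacle here: the corollary is just the worst case $x = 0$ of the parameterized bound, i.e.\ it is the bound one is left with when no part of the key path accesses $\resource_q$, and it is the form that is convenient to use later when the exact value of $x$ is not needed. The one thing worth a sentence in the write-up is the sign check that justifies dropping the $-x$ term, and possibly a remark that this matches the intuition that key-path accesses ``convert'' potential delay blocking into key-path blocking, so removing them can only increase $\interfer_{i,q}^I$'s upper bound in the form stated.
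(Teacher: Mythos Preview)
Your proposal is correct and matches the paper's approach exactly: the paper simply states that Lemma~\ref{l: interferingintra} directly implies the corollary, and your argument spells out precisely that implication (drop the nonnegative $x$ from $N_{i,q}-x$). There is nothing to add.
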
	
In the following we bound $\interfer_{i,q}^O$.  We use $\njobs_{i,j}^q$ to denote the maximal number of jobs of $\tau_j$
that may have contention on  resource $\resource_q$ with the analyzed job $J_i$ of task $\tau_i$, which can be computed 
by \cite{yang2015global,brandenburg2011scheduling}:
\begin{equation}\label{eq:nijq}
\njobs_{i,j}^q=
\left\{
\begin{array}{ll}
 \lceil\frac{D_i+D_j}{T_j}  \rceil & \text{if both $\tau_i$ and $\tau_j$ access $\resource_q$} \\
0 & \text{otherwise}
\end{array}\right.
\end{equation}

$\njobs_{i,j}^q = 0$ if either $\tau_i$ or $\tau_j$ does not access $\resource_q$,
since there is no inter-task blocking between $\tau_i$ and $\tau_j$ due to 
$\resource_q$.

\begin{lemma} \label{l: interferinginter}
$\displaystyle 
\interfer_{i,q}^{O,j}\leq  m_i \njobs_{i,j}^q N_{j,q} L_{j,q}
$
\end{lemma}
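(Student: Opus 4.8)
The plan is to charge \emph{all} of the inter-task blocking on $\resource_q$ attributed to $\tau_j$ against the total time during which jobs of $\tau_j$ actually \emph{hold} the lock affiliated to $\resource_q$ while $J_i$ is alive. Concretely, I would let $Z$ denote the cumulative length of the subintervals of $[r(J_i),f(J_i))$ during which some vertex of some job of $\tau_j$ holds the lock on $\resource_q$. Since accesses to $\resource_q$ are mutually exclusive, $Z$ is well defined, and it is bounded by $Z \le \njobs_{i,j}^q N_{j,q} L_{j,q}$: by (\ref{eq:nijq}) (citing \cite{yang2015global,brandenburg2011scheduling}) at most $\njobs_{i,j}^q$ jobs of $\tau_j$ can have any contention on $\resource_q$ with $J_i$, and each such job holds the lock at most $N_{j,q}$ times, each time for at most $L_{j,q}$.

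Next I would split the window $[r(J_i),f(J_i))$ into the set $\mathcal{K}$ of instants at which some processor is busy with a vertex of the key path $\lambda$ and its complement $\mathcal{D}$; these are disjoint, so writing $Z = Z_{\mathcal{K}} + Z_{\mathcal{D}}$ for the portions of $\tau_j$'s lock-holding time falling in $\mathcal{K}$ and $\mathcal{D}$ is exact. For the key-path part: every instant counted in $B_{i,q}^{\lambda,O,j}$ is one at which a vertex of $\lambda$ spins for $\resource_q$, hence a job of $\tau_j$ holds $\resource_q$ at that instant and that instant lies in $\mathcal{K}$; therefore $B_{i,q}^{\lambda,O,j} \le Z_{\mathcal{K}}$. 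For the delay part: $B_{i,q}^{\overline{\lambda},O,j}$ accumulates spinning over all $m_i$ processors at instants of $\mathcal{D}$ (where, by Lemma \ref{l:keypath}, all $m_i$ processors are busy), and at each such instant the spinning for $\resource_q$ requires a job of $\tau_j$ to hold the lock; since the holder runs on a processor not belonging to $\tau_i$, at most $m_i$ processors of $\tau_i$ can be spinning simultaneously, so at most $m_i$ units of delay blocking are charged per unit of $Z_{\mathcal{D}}$, giving $B_{i,q}^{\overline{\lambda},O,j} \le m_i Z_{\mathcal{D}}$.

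Combining with $\interfer_{i,q}^{O,j} = m_i B_{i,q}^{\lambda,O,j} + B_{i,q}^{\overline{\lambda},O,j}$ and $Z_{\mathcal{K}} + Z_{\mathcal{D}} \le Z$ then yields $\interfer_{i,q}^{O,j} \le m_i Z_{\mathcal{K}} + m_i Z_{\mathcal{D}} \le m_i Z \le m_i \njobs_{i,j}^q N_{j,q} L_{j,q}$, as claimed. The main obstacle I anticipate is not the arithmetic but the careful per-instant counting: justifying that a spinning key-path vertex occupies exactly one processor (so it is charged at most once per unit of external lock holding), that during a delay subinterval at most $m_i$ processors can be simultaneously spinning for $\resource_q$, and that the partition of $[r(J_i),f(J_i))$ into $\mathcal{K}$ and $\mathcal{D}$ coincides with the disjointness already used when $B_i^{\lambda}$ and $B_i^{\overline{\lambda}}$ were separated in (\ref{eq:totalblock}), so that no lock-holding time of $\tau_j$ is double counted across the two terms.
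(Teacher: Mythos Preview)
Your proposal is correct and follows essentially the same approach as the paper: bound the total time $\tau_j$ holds the lock on $\resource_q$ by $\njobs_{i,j}^q N_{j,q} L_{j,q}$, split it into two disjoint parts, charge $B_{i,q}^{\lambda,O,j}$ to one part and $B_{i,q}^{\overline{\lambda},O,j}$ to at most $m_i$ times the other, and sum. The only cosmetic difference is the split itself---the paper partitions $\tau_j$'s access time into ``causes key-path blocking'' ($X$) versus ``does not'' ($Y$), whereas you partition by whether the instant lies in $\mathcal{K}$ or $\mathcal{D}$; since $X\subseteq Z_{\mathcal{K}}$ and $Z_{\mathcal{D}}\subseteq Y$, the two bookkeepings collapse to the same bound $m_i(X+Y)=m_iZ$.
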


\begin{proof}

The maximum number of jobs of $\tau_j$ that may contend with $J_i$ on $\resource_q$ is $\njobs_{i,j}^q$.
The total access time to $\resource_q$ by all other jobs of $\tau_j$ during $[r(J_i), f(J_i))$ 
is at most
$\njobs_{i,j}^q N_{j,q} L_{j,q}$. We divide it into two disjoint parts $\njobs_{i,j}^q N_{j,q} L_{j,q} = X + Y$, where:
	\begin{itemize}
		\item $X$ is the total access time  to resource $\resource_q$ by $\tau_j$ that causes 
		key path blocking. We know \begin{equation}\label{e: interferinginter-1}
		B_{i,q}^{\lambda, O,j} = X 
		\end{equation}
		
		\item $Y$ is the total access time  to  
		$\resource_q$ by $\tau_j$ that does \emph{not} cause key path blocking. By definition, key path blocking 
		and delay blocking cannot happen at the same time.
		Therefore, any resource access time that causes 
		inter-task delay blocking must be included in $Y$.
		Each time unit in $Y$ can cause at most $m_i$
		inter-task delay blocking time (at most $m_i$ processors are spinning). Therefore, the inter-task delay blocking is bounded by
		
		\begin{equation}\label{e: interferinginter-2}
		B_{i,q}^{\overline{\lambda}, O.j} \leq m_i Y
		\end{equation}
	\end{itemize}
By (\ref{e: interferinginter-1}) and (\ref{e: interferinginter-2}) we have
\begin{equation*}
\interfer_{i,q}^{O,j} = m_i B_{i,q}^{\lambda, O,j} + B_{i,q}^{\overline{\lambda}, O,j} \leq m_i (X + Y) = m_i \njobs_{i,j}^q N_{j,q} L_{j,q}\vspace{-6mm}
\end{equation*}
\end{proof}

Now we are ready to bound $\tau_i$'s worst-case response time.

\begin{theorem} \label{t: test1}
	For unordered, $R_i$ is bounded by: 
	\begin{equation*}
	R_i \leq 
	\frac{\sumC_i\!+\!(m_i \!-\! 1)(\critical_i\!+\!\displaystyle\sum_{\resource_{q}\in \resourceset_i}N_{i,q} L_{i,q})}{m_i}	
	+
	\sum_{j \neq i}\sum_{\resource_q \in \resourceset_i} \!\! \njobs_{i,j}^q N_{j,q} L_{j,q}
	\end{equation*}
\end{theorem}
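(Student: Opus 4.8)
The plan is to assemble the bound directly from the machinery already in place, with essentially no new work beyond bookkeeping. The backbone is Lemma~\ref{t:tightbound}, which for an arbitrary job $J_i$ of $\tau_i$ gives $R(J_i)\le\big(\sumC_i+(m_i-1)\maxL_i+\interfer_i\big)/m_i$, so it suffices to upper bound $\interfer_i$. I would then expand $\interfer_i$ with the resource-wise and task-wise decomposition~(\ref{eq:totalinterfering}), namely $\interfer_i=\sum_{\resource_q\in\Theta_i}\big(\interfer_{i,q}^I+\sum_{j\ne i}\interfer_{i,q}^{O,j}\big)$, and bound each summand separately.

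For the intra-task term I would apply Corollary~\ref{c:inter}: $\interfer_{i,q}^I\le(m_i-1)N_{i,q}L_{i,q}$. For the inter-task term I would apply Lemma~\ref{l: interferinginter}: $\interfer_{i,q}^{O,j}\le m_i\,\njobs_{i,j}^q N_{j,q}L_{j,q}$. Here it is worth noting why the maximization over the unknown number $x\in[0,N_{i,q}]$ of key-path accesses to $\resource_q$ (flagged just before this subsection) does not actually surface: the inter-task bound of Lemma~\ref{l: interferinginter} is independent of $x$, while the intra-task bound $\interfer_{i,q}^I\le(m_i-1)(N_{i,q}-x)L_{i,q}$ of Lemma~\ref{l: interferingintra} is decreasing in $x$, so its maximum over $[0,N_{i,q}]$ is attained at $x=0$, which is exactly Corollary~\ref{c:inter}. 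Hence $\max_{x}\big(\interfer_{i,q}^I(x)+\interfer_{i,q}^O(x)\big)\le(m_i-1)N_{i,q}L_{i,q}+m_i\sum_{j\ne i}\njobs_{i,j}^q N_{j,q}L_{j,q}$.

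Summing over $\resource_q\in\Theta_i$ gives $\interfer_i\le(m_i-1)\sum_{\resource_q\in\Theta_i}N_{i,q}L_{i,q}+m_i\sum_{\resource_q\in\Theta_i}\sum_{j\ne i}\njobs_{i,j}^q N_{j,q}L_{j,q}$. Plugging this into the bound from Lemma~\ref{t:tightbound}, dividing through by $m_i$, and grouping the $(m_i-1)N_{i,q}L_{i,q}$ terms together with the $(m_i-1)\maxL_i$ term, the factor $m_i$ in front of the inter-task double sum cancels the $1/m_i$ and leaves that sum as an additive term outside the fraction, yielding exactly the claimed bound on $R(J_i)$. Since $J_i$ is an arbitrary job of $\tau_i$ and the bound does not depend on the choice of $J_i$, the same bound holds for $R_i$, which finishes the proof.

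Since every ingredient has already been proved, I do not expect a genuine obstacle; the only points demanding care are purely clerical — keeping the coefficients $m_i$ versus $m_i-1$ attached to the correct blocking components (this is what makes the inter-task contribution land outside the $1/m_i$ fraction), and recognizing that $\resourceset_i$ in the statement and $\Theta_i$ in~(\ref{eq:totalinterfering}) denote the same set of resources accessed by $\tau_i$.
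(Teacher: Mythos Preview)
Your proposal is correct and follows essentially the same route as the paper: the paper's proof invokes (\ref{eq:totalinterfering}), Corollary~\ref{c:inter}, and Lemma~\ref{l: interferinginter} to bound $\interfer_i$, then applies Lemma~\ref{t:tightbound}. Your additional remarks on why the maximization over $x$ is absorbed (the inter-task bound is $x$-free and the intra-task bound is maximized at $x=0$) and on the $m_i$ cancellation are useful clarifications that the paper leaves implicit, but the argument is the same.
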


\begin{proof}
	
	By condition (\ref{eq:totalinterfering}), Corollary \ref{c:inter} and Lemma \ref{l: interferinginter}, we have 
		\begin{equation}
		\interfer_{i}\leq \sum_{\resource_q \in \resourceset_i}
		\left( (m_i-1)N_{i,q} L_{i,q}+m_i \sum_{j\neq i}\njobs_{i,j}^qN_{j,q} L_{j,q} \right)
		\label{eq:overallinterfering}
		\end{equation}
and by Lemma \ref{t:tightbound} the theorem is proved.
\end{proof}
 
Task $\tau_i$ is schedulable if $R_i \leq D_i$, so we can  calculate the value of $m_i$ for $\tau_i$ to be schedulable based on Theorem \ref{t: test1}: 
\begin{corollary}\label{c:mallocation}
	Task $\tau_i$ is schedulable on $m_i$ processors if 
	\begin{equation}\label{eq:co21}
	D_i-(\sum_{j \neq i}\njobs_{i,j}^q\sum_{\resource_q \in \resourceset_i}N_{j,q} L_{j,q}\!+\!\critical_i\!+\!\sum_{\resource_{q}\in \resourceset_i}N_{i,q} L_{i,q})>0
	\end{equation}
	and
	\[
	m_i=\left \lceil \frac{\sumC_i-(\critical_i+\displaystyle\sum_{\resource_{q}\in \resourceset_i}N_{i,q} L_{i,q})}{D_i-(\displaystyle\sum_{j \neq i}\njobs_{i,j}^q\displaystyle\sum_{\resource_q \in \resourceset_i}N_{j,q} L_{j,q}\!+\!\critical_i\!+\!\displaystyle\sum_{\resource_{q}\in \resourceset_i}N_{i,q} L_{i,q})} \right \rceil
	\]	
\end{corollary}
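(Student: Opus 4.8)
The plan is to obtain the condition directly from Theorem~\ref{t: test1} by imposing the schedulability requirement $R_i \le D_i$ on the response-time bound and solving the resulting inequality for $m_i$.

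First I would rewrite the bound of Theorem~\ref{t: test1} in the compact form $R_i \le \frac{\sumC_i + (m_i-1)\,S}{m_i} + T$, where $S = \critical_i + \sum_{\resource_q \in \resourceset_i} N_{i,q}L_{i,q}$ collects the per-task path-plus-critical-section term and $T = \sum_{j \neq i}\sum_{\resource_q \in \resourceset_i}\njobs_{i,j}^q N_{j,q}L_{j,q}$ collects the inter-task contention term. It then suffices to pick $m_i$ so that $\frac{\sumC_i + (m_i-1)S}{m_i} + T \le D_i$.

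Next, since $m_i \ge 1$, I would clear the denominator to get the equivalent linear inequality $\sumC_i + (m_i - 1)S + m_i T \le m_i D_i$, i.e.\ $\sumC_i - S \le m_i\,(D_i - S - T)$. The quantity $D_i - S - T$ is exactly the left-hand side of~(\ref{eq:co21}); under the stated assumption that it is positive, the inequality holds for every real $m_i \ge (\sumC_i - S)/(D_i - S - T)$, hence for the smallest integer satisfying it, namely $m_i = \lceil (\sumC_i - S)/(D_i - S - T) \rceil$, which is precisely the formula in the statement (its denominator being $D_i - S - T$ and its numerator $\sumC_i - S$). I would also remark that $\njobs_{i,j}^q$ depends only on $D_i, D_j, T_j$ and not on $m_i$, so there is no circularity; that positivity of~(\ref{eq:co21}) forces $D_i > S$ and, combined with $\Gamma_i > 1$ (i.e.\ $\sumC_i > D_i$), gives $\sumC_i - S > 0$ so the computed $m_i$ is at least $1$; and that the right-hand side $m_i(D_i - S - T)$ is strictly increasing in $m_i$ once $D_i - S - T > 0$, so replacing the real threshold by its ceiling preserves the inequality.

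There is essentially no difficult step here: the argument is elementary algebra once Theorem~\ref{t: test1} is in hand. The only points that need care are (i) verifying that positivity of~(\ref{eq:co21}) is exactly the condition making the division well-defined and the set of feasible $m_i$ nonempty, and (ii) checking that rounding up to an integer does not break the inequality, which is immediate from monotonicity of $m_i \mapsto m_i(D_i - S - T)$ under that positivity.
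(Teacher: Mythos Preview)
Your proposal is correct and follows exactly the route the paper intends: the paper itself does not spell out a proof for this corollary, merely noting that it follows from Theorem~\ref{t: test1} together with the schedulability requirement $R_i\le D_i$, and your derivation supplies precisely the elementary algebra (rewriting the bound as $\frac{\sumC_i+(m_i-1)S}{m_i}+T\le D_i$, clearing the denominator, and solving for $m_i$) that makes this explicit. Your additional sanity checks (no circularity through $\njobs_{i,j}^q$, positivity of the numerator via $\Gamma_i>1$, monotonicity justifying the ceiling) are welcome but not required by the paper.
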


If each task can get enough processors according to Corollary \ref{c:mallocation}, the whole system is schedulable. Otherwise, the system is decided to be unschedulable. This procedure is shown in Algorithm \ref{al:firstmodel}.

%

\begin{algorithm}
	\caption{Processor partitioning algorithm for unordered.}
	\begin{algorithmic}[1]
		\FOR{each task $\tau_i \in \Tau$}
		\IF{(\ref{eq:co21}) is satisfied}
		\STATE calculate $m_i$ according to Corollary \ref{c:mallocation};
		\IF{less than $m_i$ processors are available}
		\STATE \textbf{return} \textit{unschedulable}
		\ENDIF
		\STATE assign $m_i$  processors to $\tau_i$
		\ELSE
		\STATE \textbf{return} \textit{unschedulable}
		\ENDIF
		\ENDFOR
		\STATE \textbf{return} \textit{schedulable} 
	\end{algorithmic}
	\label{al:firstmodel}
\end{algorithm}

\subsection{FIFO-order}
In the following we develop analysis techniques for FIFO-order. We first derive an upper bound for $\interfer_{i,q}^I(x)$ with a particular $x$:

\begin{lemma} \label{l: improveinterferingintra}
$  \interfer_{i,q}^I(x) \leq \mathcal{F}^I(x)$ in FIFO-order, where
	\[
\mathcal{F}^I(x) = ((N_{i,q}-x^{})(m_i-1)-\max\{1-x, 0\}\Delta)L_{i,q}
	\]
	and $\Delta=\min\{N_{i,q}, m_i\}\left (m_i-\frac{\min\{N_{i,q}, m_i\}+1}{2} \right)$.
\end{lemma}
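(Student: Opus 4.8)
The plan is to split the argument on the value of $x$, the number of accesses to $\resource_q$ made by vertices on the key path $\lambda$. When $x \geq 1$ the coefficient $\max\{1-x,0\}$ vanishes and $\mathcal{F}^I(x) = (N_{i,q}-x)(m_i-1)L_{i,q}$, which is exactly the bound already established in Lemma~\ref{l: interferingintra}; since the proof of that lemma never refers to the order in which requests are served, it applies verbatim under FIFO. Hence the only case requiring new work is $x = 0$. When $x = 0$ no vertex on $\lambda$ accesses $\resource_q$, so a key-path vertex can never spin for $\resource_q$; thus $B_{i,q}^{\lambda,I} = 0$ and $\interfer_{i,q}^I(0) = B_{i,q}^{\overline{\lambda},I}$. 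It therefore suffices to prove the sharper bound $B_{i,q}^{\overline{\lambda},I} \leq \bigl(N_{i,q}(m_i-1)-\Delta\bigr)L_{i,q}$ in FIFO-order.

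The key step is a per-request charging argument that exploits FIFO. Enumerate all accesses to $\resource_q$ issued by vertices of $J_i$ as $q_1, q_2, \dots, q_p$ with $p \leq N_{i,q}$, ordered by the instant the corresponding lock request enters the FIFO queue (ties broken consistently with the queue order). Fix one request $q_j$ and consider the total time it spends spinning while some vertex of $\tau_i$ holds the lock on $\resource_q$; this quantity is an over-estimate of the contribution of $q_j$ to $B_{i,q}^{\overline{\lambda},I}$, so bounding it suffices. Under FIFO, any request holding the lock while $q_j$ spins must lie ahead of $q_j$ in the queue, hence was issued before $q_j$ and is one of $q_1,\dots,q_{j-1}$; moreover it must still be active (pending or holding) at the instant $q_j$ is issued, and since $\tau_i$ executes on only $m_i$ processors --- one of which is occupied by $q_j$ itself --- at most $m_i-1$ such requests can be simultaneously active. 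Consequently $q_j$ is delayed by at most $\min\{j-1,\,m_i-1\}$ distinct lock-holding episodes of $\tau_i$, each of duration at most $L_{i,q}$, contributing at most $\min\{j-1,\,m_i-1\}\,L_{i,q}$.

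Summing over all requests yields $B_{i,q}^{\overline{\lambda},I} \leq L_{i,q}\sum_{j=1}^{N_{i,q}}\min\{j-1,\,m_i-1\}$, and the remaining task is the arithmetic identity $\sum_{j=1}^{N_{i,q}}\min\{j-1,\,m_i-1\} = N_{i,q}(m_i-1) - \Delta$. When $N_{i,q} \leq m_i$ each term is $j-1$ and the sum is $N_{i,q}(N_{i,q}-1)/2$; when $N_{i,q} > m_i$ it is $m_i(m_i-1)/2 + (N_{i,q}-m_i)(m_i-1)$. In both cases, writing $k = \min\{N_{i,q},m_i\}$ and $\Delta = k\bigl(m_i - \tfrac{k+1}{2}\bigr)$, one checks directly that $N_{i,q}(m_i-1)-\Delta$ reduces to the same expression, which closes the $x = 0$ case and hence the lemma.

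The main obstacle is making the FIFO charging argument airtight, specifically: (i) that no request issued \emph{after} $q_j$ can ever hold the lock while $q_j$ spins --- this is precisely where FIFO is needed, since under the unordered policy a later request may be served first; (ii) that the earlier requests which actually delay $q_j$ are exactly those active at $q_j$'s issue instant, so their count is at most $m_i-1$; and (iii) that each such request accounts for at most a single lock-holding interval of length $L_{i,q}$. Once these three points are pinned down, the summation and the case split on $N_{i,q}$ versus $m_i$ are routine.
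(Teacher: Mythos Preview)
Your proposal is correct and follows essentially the same approach as the paper: split on $x$, invoke Lemma~\ref{l: interferingintra} for $x\geq 1$, and for $x=0$ observe $B_{i,q}^{\lambda,I}=0$ and bound $B_{i,q}^{\overline{\lambda},I}$ by the total intra-task FIFO spinning on $\resource_q$. The only difference is that the paper imports the closed-form bound $\bigl(\tfrac{\alpha(\alpha-1)}{2}+(m_i-1)(N_{i,q}-\alpha)\bigr)L_{i,q}$ with $\alpha=\min\{N_{i,q},m_i\}$ directly from \cite{dinh2018blocking} and then simplifies it to $((m_i-1)N_{i,q}-\Delta)L_{i,q}$, whereas you re-derive the same expression from scratch via the per-request charging argument $\sum_{j=1}^{N_{i,q}}\min\{j-1,m_i-1\}$; your route is more self-contained but otherwise identical in substance.
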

\begin{proof}
We prove the lemma in two cases.

\begin{enumerate}
\item $x \neq 0$. By Lemma \ref{l: interferingintra} we know
for any $x$ it holds:

%
%
%
%
%
%
%

\begin{equation}\label{e: improveinterferingintra-1}
\interfer_{i,q}^I(x) \leq (N_{i,q}-x)(m_i-1)L_{i,q}
\end{equation}

\item $x = 0$. In this case, 
$B_{i,q}^{\lambda,I}=0$ and  $B_{i,q}^{\overline{\lambda},I}$ is bounded by the maximum blocking time that may be introduced by $N_{i,q}$ requests on $m_i$ processors which equals $(\frac{\alpha(\alpha-1)}{2}+(m_i-1)(N_{i,q}-\alpha))L_{i,q}$ \cite{dinh2018blocking}, where $\alpha=\min\{N_{i,q}, m_i\}$.

Then we have
\begin{eqnarray*}
B_{i,q}^{\overline{\lambda},I}&\leq&(\frac{\alpha(\alpha-1)}{2}+(m_i-1)(N_{i,q}-\alpha))L_{i,q} \nonumber \\
&=&((m_i-1)N_{i,q}-\alpha(m_i-\frac{\alpha+1}{2}))L_{i,q} \\
&=&((m_i-1)N_{i,q}-\Delta )L_{i,q} 
\label{eq:allnotonkey}
\end{eqnarray*}

Therefore, when $x=0$ (thus $B_{i,q}^{\lambda,I}=0$) we have 
\begin{equation*}
\interfer_{i,q}^I(x)= 0 + B_{i,q}^{\overline{\lambda},I}\leq ((m_i-1)N_{i,q}-\Delta )L_{i,q}
\label{eq:zero}
\end{equation*}
\end{enumerate}
In summary, in both cases the lemma is proved.
\end{proof}

\begin{lemma} \label{l: improveinterferinginter}
$\interfer_{i,q}^O(x) \leq \mathcal{F}^O(x)$ in FIFO-order, where
		\[ 
	\mathcal{F}^O(x) = \sum_{j\neq i}	\min\{ m_i \njobs_{i,j}^q N_{j,q}, (N_{i,q}+(m_i-1)x)m_j\}L_{j,q}		
		\]
\end{lemma}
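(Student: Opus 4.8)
The plan is to bound $\interfer_{i,q}^O(x)$ in two different ways and take the minimum, matching the two terms inside the $\min\{\cdot,\cdot\}$ in $\mathcal{F}^O(x)$. Recall from the definition that $\interfer_{i,q}^O(x)=\sum_{j\neq i}\interfer_{i,q}^{O,j}(x)$ where $\interfer_{i,q}^{O,j}(x)=m_i B_{i,q}^{\lambda,O,j}+B_{i,q}^{\overline{\lambda},O,j}$, so it suffices to prove, for each $\tau_j\neq\tau_i$,
\[
\interfer_{i,q}^{O,j}(x)\leq \min\{\, m_i\,\njobs_{i,j}^q N_{j,q},\ (N_{i,q}+(m_i-1)x)\,m_j\,\}\,L_{j,q},
\]
and then sum over $j$. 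The first bound, $\interfer_{i,q}^{O,j}(x)\leq m_i\,\njobs_{i,j}^q N_{j,q}\,L_{j,q}$, is exactly Lemma~\ref{l: interferinginter} (it holds for every $x$, so in particular for the analyzed $x$), and requires no new work.

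The substance is the second bound, $\interfer_{i,q}^{O,j}(x)\leq (N_{i,q}+(m_i-1)x)\,m_j\,L_{j,q}$, which is where the FIFO discipline enters. The key structural fact about FIFO spin queues is that whenever a request $r$ from $\tau_i$ is waiting in the queue for $\resource_q$, each other task $\tau_j$ can have at most one request ahead of $r$ in that queue (all requests of $\tau_j$ are issued serially within a single vertex execution, and once $r$ is enqueued any later $\tau_j$ request goes behind it). First I would count how many requests of $\tau_i$ on $\resource_q$ can ever be \emph{enqueued while $\tau_j$ holds or waits for the lock} in a way that produces inter-task blocking of type $O,j$. A request of $\tau_i$ on the key path contributes to $B_{i,q}^{\lambda,O,j}$; there are $x$ such requests, and because type-$O,j$ blocking is counted on up to $m_i$ processors each, this is the source of the $(m_i-1)x$ term after regrouping. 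A request of $\tau_i$ \emph{off} the key path contributes to $B_{i,q}^{\overline{\lambda},O,j}$; there are at most $N_{i,q}-x$ of those, but the delay-blocking bookkeeping (one unit of blocking per processor, but these intervals are ``no key-path vertex busy'' intervals) interacts with the $x$ on-path requests so that the combined count of ``$\tau_i$-requests that can be blocked by one $\tau_j$-request'' works out to $N_{i,q}+(m_i-1)x$. Each such blocking event costs at most one $\tau_j$ critical section, i.e.\ $L_{j,q}$, and a single job of $\tau_j$ issues at most $N_{j,q}$ such critical sections — but more to the point, each of the $\le m_j$ requests of a $\tau_j$-job on $\resource_q$\dots actually the cleaner accounting is: a fixed $\tau_j$-request can overtake/block at most $N_{i,q}+(m_i-1)x$ ``$\tau_i$-slots'' total, and $\tau_j$ issues at most $m_j$ simultaneously-competing requests at the relevant granularity, giving the product $(N_{i,q}+(m_i-1)x)\,m_j\,L_{j,q}$. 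I would make this precise by charging each unit of $\interfer_{i,q}^{O,j}(x)$ to a distinct (blocked $\tau_i$-request, blocking $\tau_j$-critical-section) pair and bounding the number of such pairs.

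The main obstacle will be getting the combinatorial accounting of the second bound exactly right — in particular, keeping the $m_i$-fold multiplicity on key-path blocking ($m_i B_{i,q}^{\lambda,O,j}$) and the $1$-fold-per-processor multiplicity on delay blocking ($B_{i,q}^{\overline{\lambda},O,j}$, counted over $\le m_i$ processors) consistent with the FIFO ``at most one overtaking request per competing task'' invariant, so that the two effects combine into the single clean factor $N_{i,q}+(m_i-1)x$ rather than something larger. I would handle this by first proving the invariant as a small sub-claim about FIFO queue positions, then doing the charging argument against it; the $\min$ with the per-$x$-independent bound of Lemma~\ref{l: interferinginter} then finishes the proof after summing over $j\neq i$.
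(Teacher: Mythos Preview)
Your overall structure is right: decompose $\interfer_{i,q}^O(x)=\sum_{j\neq i}\interfer_{i,q}^{O,j}(x)$, bound each summand by the minimum of two quantities, and cite Lemma~\ref{l: interferinginter} for the first one. That part matches the paper exactly.

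Your second bound, however, is built on a misstated FIFO invariant and an unnecessarily convoluted accounting. You write that ``each other task $\tau_j$ can have at most \emph{one} request ahead of $r$ in that queue (all requests of $\tau_j$ are issued serially within a single vertex execution\ldots).'' This is false here: $\tau_j$ is itself a parallel DAG task running on $m_j$ processors, so up to $m_j$ vertices of $\tau_j$ may be simultaneously spinning for $\resource_q$, and hence up to $m_j$ requests of $\tau_j$ may sit ahead of any given request of $\tau_i$ in the FIFO queue. The correct invariant is ``each request of $J_i$ is blocked by at most $m_j$ requests from $\tau_j$,'' and this is exactly what the paper uses.

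Once you have the right invariant, the combinatorics dissolves. There is no ``interaction'' between on-path and off-path requests to untangle; you simply bound the two blocking components separately:
\[
B_{i,q}^{\lambda,O,j}\leq x\,m_j\,L_{j,q},\qquad
B_{i,q}^{\overline{\lambda},O,j}\leq (N_{i,q}-x)\,m_j\,L_{j,q},
\]
because there are $x$ requests on the key path and $N_{i,q}-x$ off it, and each is blocked by at most $m_j$ critical sections of $\tau_j$. Plugging into $\interfer_{i,q}^{O,j}(x)=m_iB_{i,q}^{\lambda,O,j}+B_{i,q}^{\overline{\lambda},O,j}$ gives $(m_i x + (N_{i,q}-x))m_jL_{j,q}=(N_{i,q}+(m_i-1)x)m_jL_{j,q}$ directly. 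The factor $N_{i,q}+(m_i-1)x$ is purely an artifact of the weights $m_i$ and $1$ in the definition of $\interfer_{i,q}^{O,j}$, not of any charging argument between request pairs. Your proposed ``charging to (blocked $\tau_i$-request, blocking $\tau_j$-critical-section) pairs'' is salvageable in spirit, but as written it rests on the wrong per-task queue bound and would not go through.
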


\begin{proof}
From Lemma \ref{l: interferinginter}, we have:
\begin{equation}
\interfer_{i,q}^{O,j}(x) \leq  m_i \njobs_{i,j}^q N_{j,q} L_{j,q}
\label{eq:interferinter1}
\end{equation}
With FIFO spin locks, at most $m_j$ requests from  $\tau_j$ can be spinning at the same time (in the queue waiting for $\resource_q$), each request of $J_i$ for $\resource_q$ is blocked by at most $m_j$ requests from another task $\tau_j$
(at most $m_j$ requests from  $\tau_j$ are in the queue waiting for $\resource_q$), so
$B_{i,q}^{\lambda,O,j} $ for $x$  accesses to $\resource_q$ of vertices in $\lambda$ is bounded by 
\begin{align*}
B_{i,q}^{\lambda,O,j} & \leq  x m_j L_{j,q}
\end{align*}
The remaining $N_{i,q} - x$ accesses to $\resource_q$ are from vertices \emph{not} in $\lambda$, for which $B_{i,q}^{\overline{\lambda},O,j}$ is bounded by
\begin{align*}
B_{i,q}^{\overline{\lambda},O,j} & \leq  (N_{i,q}-x)m_j L_{j,q}
\end{align*}
Applying them to $\interfer_{i,q}^{O,j}(x)  =  m_iB_{i,q}^{\lambda,O,j}+B_{i,q}^{\overline{\lambda},O,j}$  gives
\begin{align*}
\interfer_{i,q}^{O,j}(x) 
& \leq  ( x m_i m_j +(N_{i,q}-x)m_j) L_{j,q} 	\\
& =  (N_{i,q}+(m_i-1)x)m_jL_{j,q}	
\end{align*}

By getting the minimum of this bound and the bound in (\ref{eq:interferinter1}), the lemma is proved.
\end{proof}

By now we have bounded both $\interfer_{i,q}^I(x)$ and $\interfer_{i,q}^O(x)$ for resource $\resource_q$ with a particular $x$. Since $x$ is unknown, we need to  find the value of $x$ in $[0, N_{i,q}]$ that leads to the maximal
$\interfer_{i,q}^I(x) + \interfer_{i,q}^O(x)$. By doing this for each $\resource_q \in \Theta_i$, we obtain an upper bound for $\interfer_i$ as follows:
\begin{lemma} \label{l: improveoverallinterfering}
	In FIFO-order, we have:
	\[
	\interfer_{i}\leq \sum_{\resource_q \in \resourceset_i}\max_{x\in [0, N_{i,q}]}\{  \mathcal{F}^I(x) +  \mathcal{F}^O(x)\}
	\]
	
\end{lemma}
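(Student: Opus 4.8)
The plan is to assemble Lemma~\ref{l: improveoverallinterfering} directly from the per-resource bounds already established, namely Lemma~\ref{l: improveinterferingintra} and Lemma~\ref{l: improveinterferinginter}, combined with the decomposition of $\interfer_i$ into per-resource contributions in~(\ref{eq:totalinterfering}). The only genuinely new ingredient is handling the unknown parameter $x$ (the number of key-path accesses to $\resource_q$) correctly, by maximizing over its feasible range independently for each resource.

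First I would recall that, by definition, the number $x$ of accesses to $\resource_q$ from vertices on the key path $\lambda$ satisfies $x \in [0, N_{i,q}]$, and that this value may in principle differ from resource to resource. Then, starting from the parameterized identity $\interfer_{i,q}^I + \interfer_{i,q}^O = \interfer_{i,q}^I(x) + \interfer_{i,q}^O(x)$ for the (unknown) actual value of $x$ associated with $\resource_q$, I would apply Lemma~\ref{l: improveinterferingintra} to get $\interfer_{i,q}^I(x) \le \mathcal{F}^I(x)$ and Lemma~\ref{l: improveinterferinginter} to get $\interfer_{i,q}^O(x) \le \mathcal{F}^O(x)$, hence $\interfer_{i,q}^I + \interfer_{i,q}^O \le \mathcal{F}^I(x) + \mathcal{F}^O(x) \le \max_{x' \in [0, N_{i,q}]}\{\mathcal{F}^I(x') + \mathcal{F}^O(x')\}$, the last step because the true $x$ is one particular element of the range. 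Finally, summing this inequality over all $\resource_q \in \Theta_i$ and invoking~(\ref{eq:totalinterfering}), which writes $\interfer_i = \sum_{\resource_q \in \Theta_i}(\interfer_{i,q}^I + \interfer_{i,q}^O)$, yields exactly the claimed bound.

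The subtle point worth spelling out — and the step I expect to be the main obstacle to a careful write-up rather than a technical difficulty — is that the maximization over $x$ must be performed \emph{inside} the sum, i.e., a separate $x$ is chosen for each resource, rather than a single global $x$ being fixed across all resources. This is sound precisely because the $x$ values for different resources are genuinely independent quantities (the key path can contain any number of accesses to each resource, subject only to the respective $[0, N_{i,q}]$ bounds), so upper-bounding each term by its own per-resource maximum is legitimate and loses nothing structurally. A short remark to that effect, together with the observation that both $\mathcal{F}^I$ and $\mathcal{F}^O$ depend on $\resource_q$ only through $N_{i,q}$, $L_{i,q}$, $L_{j,q}$, $m_j$ and $\njobs_{i,j}^q$ (all fixed), suffices to justify that the inner $\max$ is well-defined and finite. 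With that in place the proof is a two-line chain of inequalities, so I would keep it correspondingly brief.
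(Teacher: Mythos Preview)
Your proposal is correct and follows essentially the same approach as the paper: the paper treats this lemma as an immediate consequence of Lemmas~\ref{l: improveinterferingintra} and~\ref{l: improveinterferinginter} together with the per-resource decomposition~(\ref{eq:totalinterfering}), maximizing over the unknown $x$ separately for each resource. Your write-up is in fact more explicit than the paper's, which does not include a formal proof block for this lemma at all.
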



Then by applying this to Lemma \ref{t:tightbound}, we can bound the worst-case response time of $\tau_i$: 

\begin{theorem} \label{l: bound2}
	In FIFO-order, $R_i$ is bounded by: 
	\[
	R_i \leq 
	\frac{\displaystyle \sumC_i+(m_i \!-\! 1)\critical_i+ \sum_{\resource_q \in \resourceset_i}\max_{x\in [0, N_{i,q}]}\{  \mathcal{F}^I(x) + \mathcal{F}^O(x)\}}{m_i}	
	\]	
	where $\mathcal{F}^I(x)$ and $\mathcal{F}^O(x)$ are defined in 
	Lemma \ref{l: improveinterferingintra}  and \ref{l: improveinterferinginter}.
\end{theorem}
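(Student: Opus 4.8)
The plan is to obtain the bound by composing the order-independent response-time bound of Lemma~\ref{t:tightbound} with the FIFO-specific bound on $\interfer_i$ assembled from Lemmas~\ref{l: improveinterferingintra} and~\ref{l: improveinterferinginter}. Let $J_i$ be an arbitrary job of $\tau_i$. Lemma~\ref{t:tightbound} gives $R(J_i) \le \bigl(\sumC_i + (m_i-1)\maxL_i + \interfer_i\bigr)/m_i$, and this inequality makes no reference to the request serving order, so the whole task reduces to upper-bounding $\interfer_i$ under FIFO.

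For the fixed job $J_i$ with key path $\lambda$, each resource $\resource_q \in \resourceset_i$ is accessed by the vertices of $\lambda$ some definite number of times $x_q \in [0, N_{i,q}]$, even though $x_q$ is not known a priori. Using the decomposition $\interfer_i = \sum_{\resource_q\in\resourceset_i}\bigl(\interfer_{i,q}^I + \interfer_{i,q}^O\bigr)$ of~(\ref{eq:totalinterfering}) and bounding the $\resource_q$-term by its worst case over the admissible range of $x_q$, we get $\interfer_i \le \sum_{\resource_q\in\resourceset_i}\max_{x\in[0,N_{i,q}]}\bigl(\interfer_{i,q}^I(x) + \interfer_{i,q}^O(x)\bigr)$. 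I would then invoke Lemma~\ref{l: improveinterferingintra} to replace $\interfer_{i,q}^I(x)$ by $\mathcal{F}^I(x)$ and Lemma~\ref{l: improveinterferinginter} to replace $\interfer_{i,q}^O(x)$ by $\mathcal{F}^O(x)$ inside the maximum, which yields the bound $\interfer_i \le \sum_{\resource_q\in\resourceset_i}\max_{x\in[0,N_{i,q}]}\{\mathcal{F}^I(x)+\mathcal{F}^O(x)\}$ of Lemma~\ref{l: improveoverallinterfering}.

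Substituting this bound on $\interfer_i$ into Lemma~\ref{t:tightbound}, and recalling that $\maxL_i = \critical_i$, gives precisely the claimed bound on $R(J_i)$. Since $J_i$ was an arbitrary job of $\tau_i$ and $R_i$ is the maximum of $R(J)$ over all released jobs $J$ of $\tau_i$, the same bound holds for $R_i$, which completes the proof.

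The argument is essentially a clean composition, so the genuine work has already been absorbed into the supporting lemmas; the one point that needs care is the per-resource maximization over $x$. It is sound because, for the fixed job, the values $x_q$ are all simultaneously realized yet each is individually unknown, and the bounds $\mathcal{F}^I(x)$ and $\mathcal{F}^O(x)$ hold uniformly for every admissible value of $x$, so maximizing each resource's contribution independently is a valid over-approximation of $\interfer_i$. The heavier combinatorial content sits upstream of this theorem, in particular the $x=0$ case of Lemma~\ref{l: improveinterferingintra}, which relies on the blocking bound for $N_{i,q}$ requests on $m_i$ processors from~\cite{dinh2018blocking}, and the queue-length argument of Lemma~\ref{l: improveinterferinginter}.
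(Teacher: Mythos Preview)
Your proposal is correct and follows exactly the same approach as the paper: the paper's proof is the single line ``Proved by Lemma~\ref{t:tightbound} and Lemma~\ref{l: improveoverallinterfering},'' and you have simply spelled out this composition in detail, including the per-resource maximization over $x$ that underlies Lemma~\ref{l: improveoverallinterfering}.
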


\begin{proof}
	Proved by Lemma \ref{t:tightbound}  and Lemma \ref{l: improveoverallinterfering}.
\end{proof}

If the number of processors $m_i$ assigned to each task is given, we can use Theorem \ref{l: bound2} to compute $R_i$ and compare it with $D_i$ to decide the schedulability of $\tau_i$.

However, if the number of processors $m_i$ assigned to each task is not given and we are required to partition the total $m$ processors to each task, 
we are \emph{not} able to directly compute $m_i$ for each task $\tau_i$. This is because the worst-case response time bound of a task in Theorem \ref{l: bound2} (more specifically, $\mathcal{F}^{O}(x)$) depends on the number of processors assigned to \emph{other} tasks. Therefore, there is a cyclic dependency among the number of processor assigned to different tasks: to decide $m_i$ for $\tau_i$, we need to know $m_j$ for $\tau_j$, while 
to decide $m_j$ for $\tau_j$, we need to know $m_i$ for $\tau_i$.

In the following we present an algorithm to iteratively compute $m_i$ for each task $\tau_i$ in the presence of the cyclic dependency mentioned above. Initially, we set 
$m_i=\lceil \frac{\sumC_i-\critical_i}{D_i-\critical_i} \rceil$ for each $\tau_i$, which is number of processors to make $\tau_i$ schedulable without considering the shared resources \cite{li2014analysis}. This is a lower bound of our desired $m_i$. Then starting with these initial $m_i$ values, we gradually increase $m_i$ for each $\tau_i$, until finding a set of $m_i$ values for all tasks to make them all schedulable according to Theorem \ref{l: bound2}.
The pseudo-code of this procedure is presented in Algorithm 
\ref{al:secondmodel}.
%
%
%
%
%
%
\reduceblackhalf
\begin{algorithm}
	\caption{Processor partitioning algorithm for FIFO-order.}
	\begin{algorithmic}[1]	
		\STATE For each $\tau_i$: $m_i \gets \lceil \frac{\sumC_i-\critical_i}{D_i-\critical_i} \rceil$;
		\WHILE{(1)}
		\STATE $update\gets 0$;
		\FOR{each task $\tau_i$}
		\FOR{each resource $\resource_q \in \resourceset_i$} 
		\STATE \label{line:maximum} find $x\in [0, N_{i,q}]$  s.t., $\mathcal{F}^I(x) +\mathcal{F}^O(x)$ is maximal;
		\ENDFOR
		\STATE Compute the WCRT bound $R_i'$ using Theorem \ref{l: bound2};
		\IF{ $R_i' > D_i$}
		\STATE $m_i \gets m_i+1$; $update \gets 1$;
		\ENDIF
		\ENDFOR
		\IF{$\sum_{\tau_i\in \tau}m_i>m$}
		\STATE \textbf{return} \textit{unschedulable}
		\ENDIF
		\IF{$update=0$} 
		\STATE \textbf{return} \textit{schedulable} 
		\ENDIF
		\ENDWHILE
	\end{algorithmic}
	\label{al:secondmodel}
\end{algorithm}

\todo{It is necessary to mention that Algorithm \ref{al:secondmodel} is a heuristic algorithm to compute $m_i$ for each task to be schedulable. However, Algorithm \ref{al:secondmodel} is not optimal in the sense that minimum number of processors required by the whole task set to be schedulable is obtained. For example, after an iteration, there are two tasks that are not schedulable. Then according to Algorithm \ref{al:secondmodel}, the number of processors required by these two tasks are both increased by 1. However, it is possible that after the number of processors required by one of these two tasks is increased by 1, the other task becomes schedulable. Finding the optimal processor allocation algorithm is out of the scope of this paper which will be investigated in our future work.}

\subsection{Priority-Order}
In the following we develop analysis techniques for priority-order. We use $\tau_i^{H}$ and $\tau_i^{L}$ to denote the set of tasks with higher and lower priorities than $\tau_i$, respectively. 

We first bound $\interfer_{i,q}^I(x)$. Since different requests to a resource from the same task have the same priority, the upper bound of intra-task blocking time in priority-order is the same as in FIFO-order. Then we have: 

\begin{lemma} \label{l: priorityintra}
	$  \interfer_{i,q}^I(x) \leq \mathcal{P}^I(x)$ in priority-order, where
	\[
	\mathcal{P}^I(x) = ((N_{i,q}-x^{})(m_i-1)-\max\{1-x, 0\}\Delta)L_{i,q}
	\]
	and $\Delta=\min\{N_{i,q}, m_i\}\left (m_i-\frac{\min\{N_{i,q}, m_i\}+1}{2} \right)$.
\end{lemma}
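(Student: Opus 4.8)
The plan is to reduce the priority-order case entirely to the FIFO-order analysis already carried out in Lemma~\ref{l: improveinterferingintra}. The crucial observation is that $\interfer_{i,q}^I(x)$ concerns \emph{only} intra-task blocking on $\resource_q$, i.e.\ spinning of $\tau_i$'s vertices while the lock on $\resource_q$ is held by another vertex \emph{of the same task} $\tau_i$; requests issued by tasks in $\tau_i^H$ or $\tau_i^L$ never contribute to this quantity by definition, even if a lower-priority task happens to grab the lock ``in the middle'' (that time is inter-task blocking). Since all requests to $\resource_q$ coming from $\tau_i$ carry one and the same priority (that of $\tau_i$), the priority rule imposes no ordering among them, so they are served exactly as under a FIFO queue. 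Consequently the structural facts used in the proof of Lemma~\ref{l: improveinterferingintra} --- in particular that at most $m_i$ requests of $\tau_i$ for $\resource_q$ can be pending simultaneously (one per processor, by non-preemptivity of a spinning/holding vertex and by non-nesting of locks), and that the $k$-th of a batch of such requests waits behind at most $\min\{k-1,m_i-1\}$ of the others --- carry over verbatim.

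First I would dispatch the case $x\neq 0$: here $\max\{1-x,0\}=0$, so $\mathcal{P}^I(x)=(N_{i,q}-x)(m_i-1)L_{i,q}$, and this bound is already furnished by Lemma~\ref{l: interferingintra}, whose proof makes no reference to the request serving order. Then, for $x=0$, we have $B_{i,q}^{\lambda,I}=0$, so $\interfer_{i,q}^I(0)=B_{i,q}^{\overline{\lambda},I}$, and it remains to bound the intra-task delay blocking caused by the $N_{i,q}$ equal-priority requests of $\tau_i$ spread over its $m_i$ processors. By the argument of \cite{dinh2018blocking}, this is at most $\big(\tfrac{\alpha(\alpha-1)}{2}+(m_i-1)(N_{i,q}-\alpha)\big)L_{i,q}$ with $\alpha=\min\{N_{i,q},m_i\}$; rewriting this expression as $\big((m_i-1)N_{i,q}-\Delta\big)L_{i,q}$ --- exactly the algebra done in Lemma~\ref{l: improveinterferingintra} --- matches $\mathcal{P}^I(0)$. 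Combining the two cases gives $\interfer_{i,q}^I(x)\le\mathcal{P}^I(x)$ for all $x\in[0,N_{i,q}]$.

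The one point requiring care --- and the only place where the priority-order and FIFO-order arguments could conceivably diverge --- is the justification that an earlier intra-task request to $\resource_q$ is not overtaken by arbitrarily many later intra-task requests under the priority scheduler. I would settle this by the pending-request bound above: since a vertex of $\tau_i$ is non-preemptive while spinning or holding a lock and locks are non-nested, at any instant at most one request per processor of $\tau_i$ is outstanding, hence at most $m_i$ simultaneously; with equal-priority (hence FIFO-tie-broken) service this is precisely the situation analyzed in \cite{dinh2018blocking}, so the aggregate bound $\big(\tfrac{\alpha(\alpha-1)}{2}+(m_i-1)(N_{i,q}-\alpha)\big)L_{i,q}$ on the total intra-task delay blocking applies unchanged. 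Everything else is the verbatim case split and algebra of Lemma~\ref{l: improveinterferingintra}, so the statement follows.
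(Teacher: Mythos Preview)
Your proposal is correct and follows essentially the same approach as the paper: the paper's proof consists of the single sentence ``The lemma is the same as the proof of Lemma~\ref{l: improveinterferingintra},'' i.e.\ it also reduces the priority-order intra-task bound directly to the FIFO-order analysis. Your write-up is in fact more careful than the paper's, since you make explicit the justification the paper leaves implicit --- namely that all requests from $\tau_i$ carry the same priority and hence are served among themselves exactly as in FIFO, so the combinatorial bound from \cite{dinh2018blocking} on the aggregate intra-task spinning applies unchanged.
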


\begin{proof}
The lemma is the same as the proof of Lemma \ref{l: improveinterferingintra}. 
\end{proof}

In the following we bound $\interfer_{i,q}^O(x)$ in priority-order. We use $\varDelta_{i,j}^q$ to denote the maximal number of jobs of $\tau_j$
that may have contention on resource $\resource_q$ with a single request from job $J_i$ of task $\tau_i$, which can be computed 
by \cite{yang2015global,brandenburg2011scheduling}:
\begin{equation}
\varDelta_{i,j}^q=
\left\{
\begin{array}{ll}
\lceil\frac{dpr(\tau_i,\resource_q)+D_j}{T_j}  \rceil & \text{if both $\tau_i$ and $\tau_j$ access $\resource_q$} \\
0 & \text{otherwise}
\end{array}\right.
\label{eq: maximumdelaynumber}
\end{equation}

where $dpr(\tau_i,\resource_q)$ is \emph{delay-per-request} \cite{dinh2018blocking} on $\resource_q$ of $\tau_i$. $dpr(\tau_i,\resource_q)$ denotes the length of time interval between the time that a request of $\resource_q$ from $\tau_i$ issues and the time it is served, which can be calculated by a fix-point iteration method (the calculation of $dpr(\tau_i, l_q)$ is the same as in \cite{dinh2018blocking}, thus omitted here).   


\begin{lemma} \label{l: priorityinter}
	$\interfer_{i,q}^O(x) \leq \mathcal{P}_L^O(x)+\mathcal{P}_H^O(x)$ in priority-order, where
	\[
	 \mathcal{P}_L^O(x)=\left(N_{i,q}+(m_i-1)x\right)\max_{\tau_j\in \tau_i^L}\{L_{j,q}\},
	\]
	and
	\[ 
	\mathcal{P}_H^O(x)\! = \!\!\!\sum_{\tau_j\in \tau_i^H}\!\!\min\{ m_i \njobs_{i,j}^q N_{j,q}, (N_{i,q}+(m_i-1)x)\varDelta_{i,j}^q N_{j,q}\}L_{j,q}.		
	\]
\end{lemma}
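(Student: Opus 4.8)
The plan is to bound $\interfer_{i,q}^O(x)$ by splitting the inter-task interference according to whether the interfering task $\tau_j$ has lower or higher priority than $\tau_i$, i.e., $\interfer_{i,q}^{O}(x) = \sum_{\tau_j \in \tau_i^L}\interfer_{i,q}^{O,j}(x) + \sum_{\tau_j \in \tau_i^H}\interfer_{i,q}^{O,j}(x)$, and to bound each group separately. Recall from the decomposition in the excerpt that $\interfer_{i,q}^{O,j}(x) = m_i B_{i,q}^{\lambda,O,j} + B_{i,q}^{\overline{\lambda},O,j}$, and that the $x$ accesses of vertices in $\lambda$ contribute to key-path blocking while the remaining $N_{i,q}-x$ accesses of vertices not in $\lambda$ contribute to delay blocking.

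For the lower-priority tasks $\tau_j \in \tau_i^L$: with priority-order spin locks, any request from $\tau_i$ for $\resource_q$ can be blocked by \emph{at most one} lower-priority request (once $\tau_i$'s request is enqueued, no later lower-priority request can jump ahead of it, and at most one lower-priority request is already being served). Hence each of the $x$ key-path accesses suffers inter-task key-path blocking at most $\max_{\tau_j \in \tau_i^L}\{L_{j,q}\}$, giving $B_{i,q}^{\lambda,O} \le x \cdot \max_{\tau_j\in\tau_i^L}\{L_{j,q}\}$ (restricted to lower-priority interference), and each of the $N_{i,q}-x$ non-key-path accesses suffers at most the same amount of delay blocking, giving the lower-priority part of $B_{i,q}^{\overline{\lambda},O}$ at most $(N_{i,q}-x)\max_{\tau_j\in\tau_i^L}\{L_{j,q}\}$. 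Plugging into $\interfer_{i,q}^{O,\text{low}}(x) = m_i B_{i,q}^{\lambda,O,\text{low}} + B_{i,q}^{\overline{\lambda},O,\text{low}}$ and simplifying $m_i x + (N_{i,q}-x) = N_{i,q}+(m_i-1)x$ yields exactly $\mathcal{P}_L^O(x)$.

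For the higher-priority tasks $\tau_j \in \tau_i^H$: here there is no bound of ``one'' since higher-priority requests always overtake $\tau_i$'s pending request. But the number of jobs of $\tau_j$ that can contend with a \emph{single} request of $J_i$ is $\varDelta_{i,j}^q$ (using the delay-per-request window $dpr(\tau_i,\resource_q)$ as in \cite{dinh2018blocking}), so a single request of $J_i$ is blocked by at most $\varDelta_{i,j}^q N_{j,q}$ requests from $\tau_j$. Thus the $x$ key-path accesses give $B_{i,q}^{\lambda,O,j} \le x\,\varDelta_{i,j}^q N_{j,q} L_{j,q}$ and the $N_{i,q}-x$ non-key-path accesses give $B_{i,q}^{\overline{\lambda},O,j} \le (N_{i,q}-x)\varDelta_{i,j}^q N_{j,q} L_{j,q}$; substituting into $\interfer_{i,q}^{O,j}(x) = m_i B_{i,q}^{\lambda,O,j}+B_{i,q}^{\overline{\lambda},O,j}$ again collapses to $(N_{i,q}+(m_i-1)x)\varDelta_{i,j}^q N_{j,q} L_{j,q}$. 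Taking the minimum of this with the task-agnostic bound $m_i \njobs_{i,j}^q N_{j,q} L_{j,q}$ from Lemma \ref{l: interferinginter} and summing over $\tau_j \in \tau_i^H$ gives $\mathcal{P}_H^O(x)$. Adding the two groups completes the proof.

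The main obstacle will be rigorously justifying the per-request blocking counts under priority-order: for the lower-priority side, arguing that FIFO tie-breaking among equal-priority requests plus priority ordering really does cap lower-priority interference at one request per request of $J_i$, independent of $m_j$; and for the higher-priority side, confirming that $\varDelta_{i,j}^q$ as defined via $dpr(\tau_i,\resource_q)$ correctly captures the contention window for a single request (this is where we lean on \cite{dinh2018blocking}). The remaining algebra is routine.
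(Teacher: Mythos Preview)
Your proposal is correct and follows essentially the same approach as the paper's own proof: split $\interfer_{i,q}^O(x)$ into lower- and higher-priority contributions, use the ``at most one lower-priority request per access'' observation to obtain $\mathcal{P}_L^O(x)$, and for each higher-priority $\tau_j$ combine the per-request bound $\varDelta_{i,j}^q N_{j,q}$ with the task-agnostic bound from Lemma~\ref{l: interferinginter} to get $\mathcal{P}_H^O(x)$. The algebra and the role of $x$ versus $N_{i,q}-x$ match the paper exactly.
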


\begin{proof}
	We divide $\interfer_{i,q}^O(x)$ by each individual task according to its priority: 
	\[
	\interfer_{i,q}^O(x)=\sum_{\tau_j\in \tau_i^L}\interfer_{i,q}^{O,j}(x)+\sum_{\tau_j\in \tau_i^H}\interfer_{i,q}^{O,j}(x)
	\]
		With priority ordered spin locks, each resource access request of $J_i$ for $\resource_q$ is blocked by at most one request from all tasks with lower priorities than $\tau_i$, so $\forall \tau_j \in \tau_i^L$,
	$\sum_{\tau_j\in \tau_i^L}B_{i,q}^{\lambda,O,j} $ for $x$  accesses to $\resource_q$ of vertices in $\lambda$ is bounded by 
	\begin{align*}
	\sum_{\tau_j\in \tau_i^L}B_{i,q}^{\lambda,O,j} & \leq  x \max_{\tau_j\in \tau_i^L}\{L_{j,q}\}
	\end{align*}
	The remaining $N_{i,q} - x$ accesses to $\resource_q$ are from vertices \emph{not} in $\lambda$, for which $\sum_{\tau_j\in \tau_i^L}B_{i,q}^{\overline{\lambda},O,j}$ is bounded by
	\begin{align*}
	\sum_{\tau_j\in \tau_i^L}B_{i,q}^{\overline{\lambda},O,j} & \leq  (N_{i,q}-x)\max_{\tau_j\in \tau_i^L}\{L_{j,q}\}
	\end{align*}
	Applying them to $\interfer_{i,q}^{O,j}(x)  =  m_iB_{i,q}^{\lambda,O,j}+B_{i,q}^{\overline{\lambda},O,j}$  gives
	\begin{equation}
	\sum_{\tau_j\in \tau_i^L}\interfer_{i,q}^{O,j}(x) 
	 \leq  \mathcal{P}_L^O(x).
	 \label{eq:lower}
	\end{equation}
	
	In the following, we focus on bounding $\sum_{\tau_j\in \tau_i^H}\interfer_{i,q}^{O,j}(x)$.
	
	From Lemma \ref{l: interferinginter}, we have:
	\begin{equation}
	\interfer_{i,q}^{O,j}(x) \leq  m_i \njobs_{i,j}^q N_{j,q} L_{j,q}
	\label{eq:interferinterhigh}
	\end{equation}

	From (\ref{eq: maximumdelaynumber}), each resource access request of $J_i$ for $\resource_q$ is blocked by at most $\varDelta_{i,j}^q N_{j,q}$ requests from $\tau_j$ in priority-order, so $\forall \tau_j \in \tau_i^H$,
	$B_{i,q}^{\lambda,O,j} $ for $x$  accesses to $\resource_q$ of vertices in $\lambda$ is bounded by 
	\begin{align*}
	B_{i,q}^{\lambda,O,j} & \leq  x \varDelta_{i,j}^q N_{j,q} L_{j,q}
	\end{align*}
	The remaining $N_{i,q} - x$ accesses to $\resource_q$ are from vertices \emph{not} in $\lambda$, for which $B_{i,q}^{\overline{\lambda},O,j}$ is bounded by
	\begin{align*}
	B_{i,q}^{\overline{\lambda},O,j} & \leq  (N_{i,q}-x)\varDelta_{i,j}^q N_{j,q} L_{j,q}
	\end{align*}
	Applying them to $\interfer_{i,q}^{O,j}(x)  =  m_iB_{i,q}^{\lambda,O,j}+B_{i,q}^{\overline{\lambda},O,j}$  gives
	\begin{align*}
	\interfer_{i,q}^{O,j}(x) 
	& \leq  ( x m_i \varDelta_{i,j}^q N_{j,q} +(N_{i,q}-x)\varDelta_{i,j}^q N_{j,q}) L_{j,q} 	\\
	& =  (N_{i,q}+(m_i-1)x)\varDelta_{i,j}^q N_{j,q}L_{j,q}	
	\end{align*}
	
	Getting the minimum of this bound and the bound in (\ref{eq:interferinterhigh}) gives us: 
	\[
	\sum_{\tau_j\in \tau_i^H}\interfer_{i,q}^{O,j}(x) \leq \mathcal{P}_H^O(x).
	\]
Combining with (\ref{eq:lower}), the lemma is proved.
\end{proof}

Then we can bound the worst-case response time of $\tau_i$ in priority-order: 

\begin{theorem} \label{l: bound3}
	In priority-order, $R_i$ is bounded by: 
	\[
	R_i \leq 
	\frac{\displaystyle \sumC_i+(m_i \!-\! 1)\critical_i+ \sum_{\resource_q \in \resourceset_i}\max_{x\in [0, N_{i,q}]}\{  \mathcal{P}^I(x) + \mathcal{P}^O(x)\}}{m_i}	
	\]	
	where $\mathcal{P}^I(x)$ and $\mathcal{P}^O(x)=\mathcal{P}_L^O(x)+\mathcal{P}_H^O(x)$ are defined in 
	Lemma \ref{l: priorityintra}  and \ref{l: priorityinter}.
\end{theorem}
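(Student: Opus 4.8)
The plan is to obtain Theorem~\ref{l: bound3} by plugging the per-resource, per-$x$ bounds from Lemma~\ref{l: priorityintra} and Lemma~\ref{l: priorityinter} into the master inequality of Lemma~\ref{t:tightbound}, exactly in the way Theorem~\ref{l: bound2} was derived from its FIFO counterparts. Concretely, the steps are: (i) recall that by the parameterization introduced before the Unordered subsection, $\interfer_i \leq \sum_{\resource_q \in \Theta_i} \max_{x \in [0,N_{i,q}]} (\interfer_{i,q}^I(x) + \interfer_{i,q}^O(x))$, since $x$ — the number of key-path accesses to $\resource_q$ — is unknown but lies in $[0,N_{i,q}]$; (ii) apply Lemma~\ref{l: priorityintra} to get $\interfer_{i,q}^I(x) \leq \mathcal{P}^I(x)$ and Lemma~\ref{l: priorityinter} to get $\interfer_{i,q}^O(x) \leq \mathcal{P}_L^O(x) + \mathcal{P}_H^O(x) = \mathcal{P}^O(x)$; (iii) combine these termwise, so that $\interfer_i \leq \sum_{\resource_q \in \Theta_i} \max_{x \in [0,N_{i,q}]} (\mathcal{P}^I(x) + \mathcal{P}^O(x))$; (iv) substitute this upper bound for $\interfer_i$ into the response-time bound $R(J_i) \leq \frac{\sumC_i + (m_i-1)\maxL_i + \interfer_i}{m_i}$ of Lemma~\ref{t:tightbound}, and note that since $J_i$ is arbitrary this bounds $R_i$ as claimed.

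The one subtlety worth a sentence of care is that the quantity $x$ is a single variable attached to resource $\resource_q$ and both $\mathcal{P}^I$ and $\mathcal{P}^O$ are evaluated at the \emph{same} $x$; so the sound move is to take the maximum of the \emph{sum} $\mathcal{P}^I(x) + \mathcal{P}^O(x)$ over $x$ for each $\resource_q$ separately, rather than maximizing the two pieces independently (which would only make the bound larger, hence still safe, but the stated theorem uses the tighter joint maximum). This is precisely the point already made in Lemma~\ref{l: improveoverallinterfering} for the FIFO case, and the priority case is identical in form.

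I do not expect any real obstacle here: the theorem is a mechanical corollary, and the proof is simply ``by Lemma~\ref{t:tightbound}, Lemma~\ref{l: priorityintra} and Lemma~\ref{l: priorityinter}'' — mirroring the one-line proof of Theorem~\ref{l: bound2}. If anything needed checking it would be that the definitions of $\mathcal{P}^I(x)$, $\mathcal{P}_L^O(x)$ and $\mathcal{P}_H^O(x)$ as given in the two preceding lemmas are the same expressions referenced in the theorem statement, and that summing over $\resource_q \in \Theta_i$ correctly reassembles $\interfer_i$ via equation~(\ref{eq:totalinterfering}) — both of which are immediate. So the proof proposal is: state the parameterized bound on $\interfer_i$, invoke the two lemmas for the intra- and inter-task contributions at a common $x$, take the per-resource maximum over $x \in [0,N_{i,q}]$, sum over resources, and finish with Lemma~\ref{t:tightbound}.
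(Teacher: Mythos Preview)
Your proposal is correct and matches the paper's approach exactly: the paper's proof simply says it ``shar[es] the same idea with the proof of Theorem~\ref{l: bound2},'' which in turn is ``by Lemma~\ref{t:tightbound} and Lemma~\ref{l: improveoverallinterfering},'' i.e., precisely the chain (i)--(iv) you describe. Your remark about taking the joint maximum over the common $x$ is also on point and mirrors the role of Lemma~\ref{l: improveoverallinterfering} in the FIFO case.
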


\begin{proof}
	The proof is done by sharing the same idea with the proof of Theorem \ref{l: bound2}, thus omitted here.
\end{proof}

Similarly with that in FIFO-order, we present an algorithm to iteratively compute the minimum $m_i$ for each task $\tau_i$ to be schedulable. We start by setting
$m_i=\lceil \frac{\sumC_i-\critical_i}{D_i-\critical_i} \rceil$ for each $\tau_i$ and then gradually increase $m_i$ until finding the minimum value of $m_i$ for $\tau_i$ to be schedulable according to Theorem \ref{l: bound3}. The pseudo-code is shown in Algorithm \ref{al:secondmodelp}.

\begin{algorithm}
	\caption{Processor partitioning algorithm for priority-order.}
	\begin{algorithmic}[1]	
		\STATE For each $\tau_i$: $m_i \gets \lceil \frac{\sumC_i-\critical_i}{D_i-\critical_i} \rceil$;
		\FOR{each task $\tau_i$}
		\WHILE{(1)}
		\FOR{each resource $\resource_q \in \resourceset_i$} 
		\STATE \label{line:maximump} find $x\in [0, N_{i,q}]$  s.t., $\mathcal{P}^I(x) +\mathcal{P}^O(x)$ is maximal;
		\ENDFOR
		\STATE Compute the WCRT bound $R_i'$ using Theorem \ref{l: bound3};
		\IF{ $R_i' > D_i$}
		\STATE $m_i \gets m_i+1$;
		\ELSE 
		\STATE break;
		\ENDIF
		\ENDWHILE
		\ENDFOR
		\IF{$\sum_{\tau_i\in \tau}m_i>m$}
		\STATE \textbf{return} \textit{unschedulable}
		\ELSE 
		\STATE \textbf{return} \textit{schedulable} 
		\ENDIF
	\end{algorithmic}
	\label{al:secondmodelp}
\end{algorithm}



	\section{Evaluations}\label{s:evaluate}

In this section, we evaluate the performance of our approaches \todo{in terms of acceptance ratios, i.e., the ratio between the number of task sets that are schedulable and the number of the whole task sets,} in comparison with the state-of-the-art:
\begin{itemize}
	\item XU-U: Algorithm \ref{al:firstmodel} for unordered spin locks.
	\item XU-F: Algorithm \ref{al:secondmodel} for FIFO-ordered spin locks.
	\item XU-P: Algorithm \ref{al:secondmodelp} for priority-ordered spin locks.
	\item SON-F: test for FIFO spin locks in \cite{dinh2018blocking}.
	\item SON-P: test for priority-ordered spin locks in \cite{dinh2018blocking}.
\end{itemize}

In particular, we adopt an optimal priority assignment when evaluating XU-P and SON-P for priority-order, where we try all permutations of priorities for each task set until either the task set is schedulable or all permutations have been checked \footnote{Note that enumerating all possible priority permutations may result in computation explosion when the number of tasks is large (we have at most 10 tasks in a task set in our experiments, i.e., in Figure \ref{fig:sythetic}.(e)). However, proposing methods of priority assignment is out of the scope in this paper. We choose this method only for comparing with the results from \cite{dinh2018blocking} where the optimal priority assignment is shown to have the best performance.}. \todo{There are several different methods to make the priority assignment, such as assign the locking-priorities based on the tasks' relative deadlines or simulated annealing to find an
	approximately optimal priority assignment \cite{dinh2018blocking}. However, we adopt the optimal priority assignment method to make a fair comparison with \cite{dinh2018blocking}, which is also shown with the best performance in \cite{dinh2018blocking}.}

We compare the above approaches with both synthetic workload and workload generated according to realistic OpenMP programs. \todo{It is necessary to mention that we do not make simulations of scheduling or actually execute any programs but test the schedulability of task sets (either synthetic workload or realistic OpenMP programs) by using their parameters according to different approaches listed above.}

\subsection{Synthetic Workload} \label{ss: sythetic}

\begin{figure*}[htb]
	\centering
	\subfigure[Under different $U_{norm}$.]{\includegraphics[width=2in]{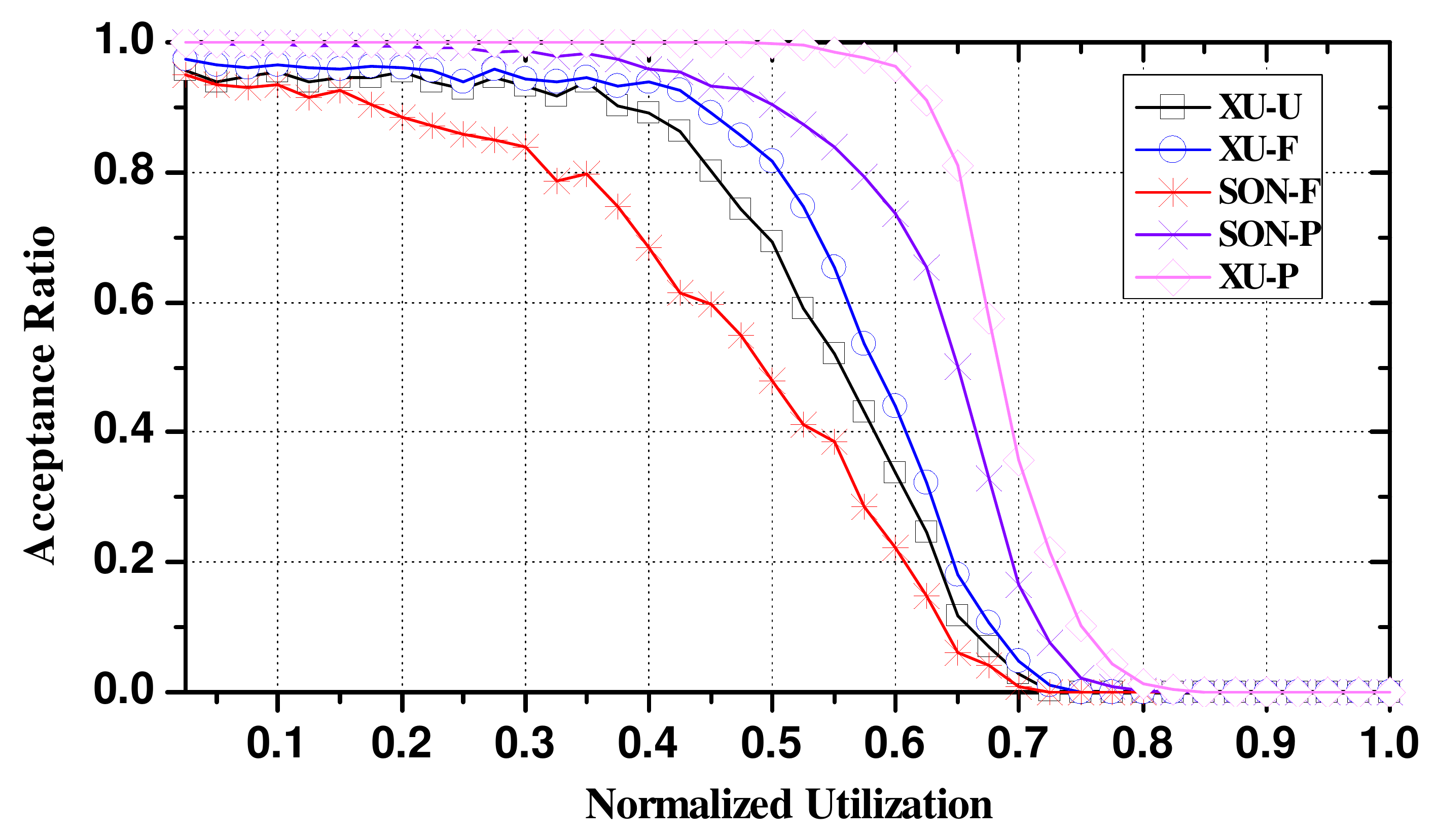}}
	\subfigure[Under different $\sum_{\tau_i \in \tau}N_{i,q}$.]{\includegraphics[width=2in]{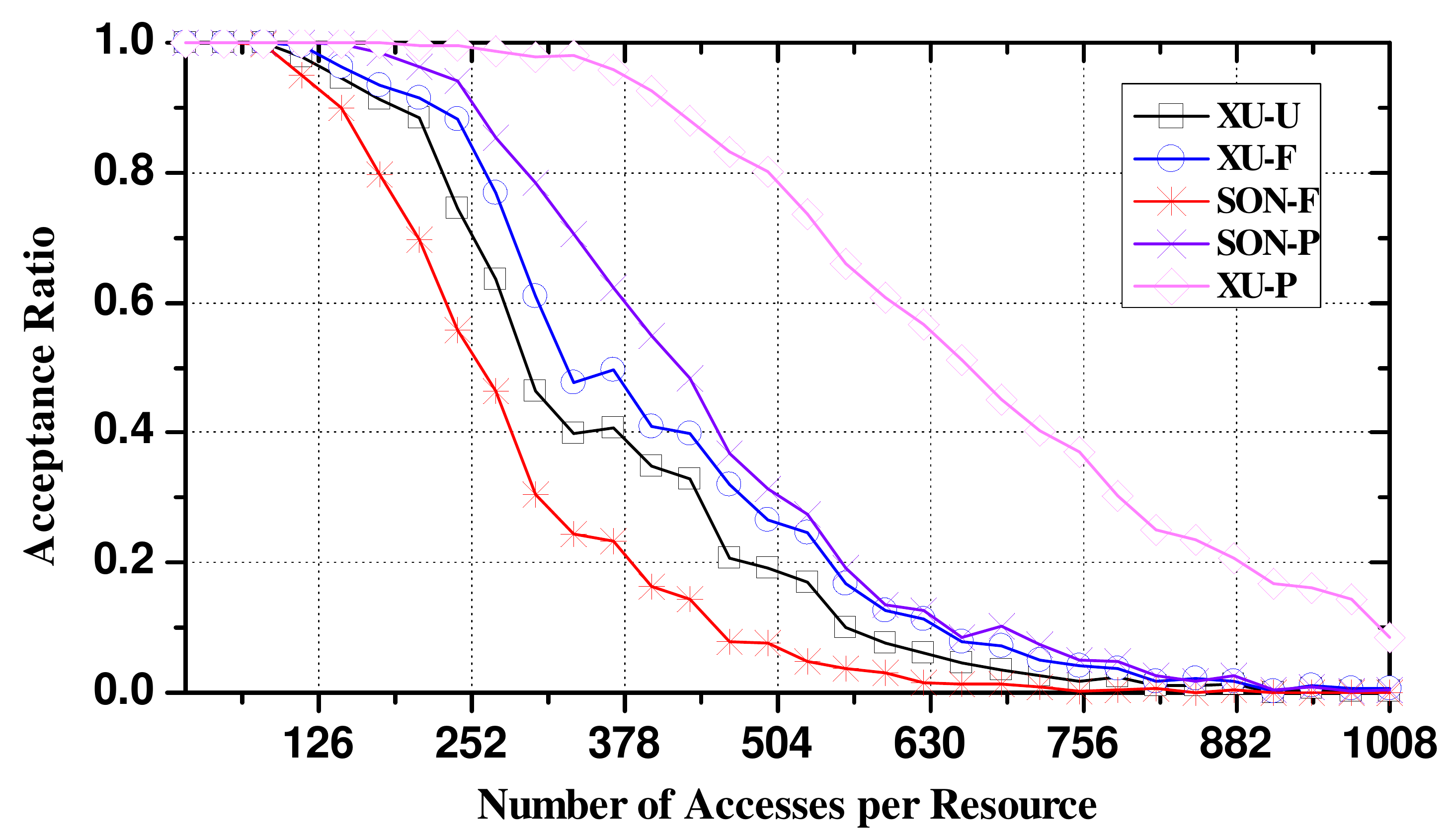}}
	\subfigure[Under different $|\Theta|$.]{\includegraphics[width=2in]{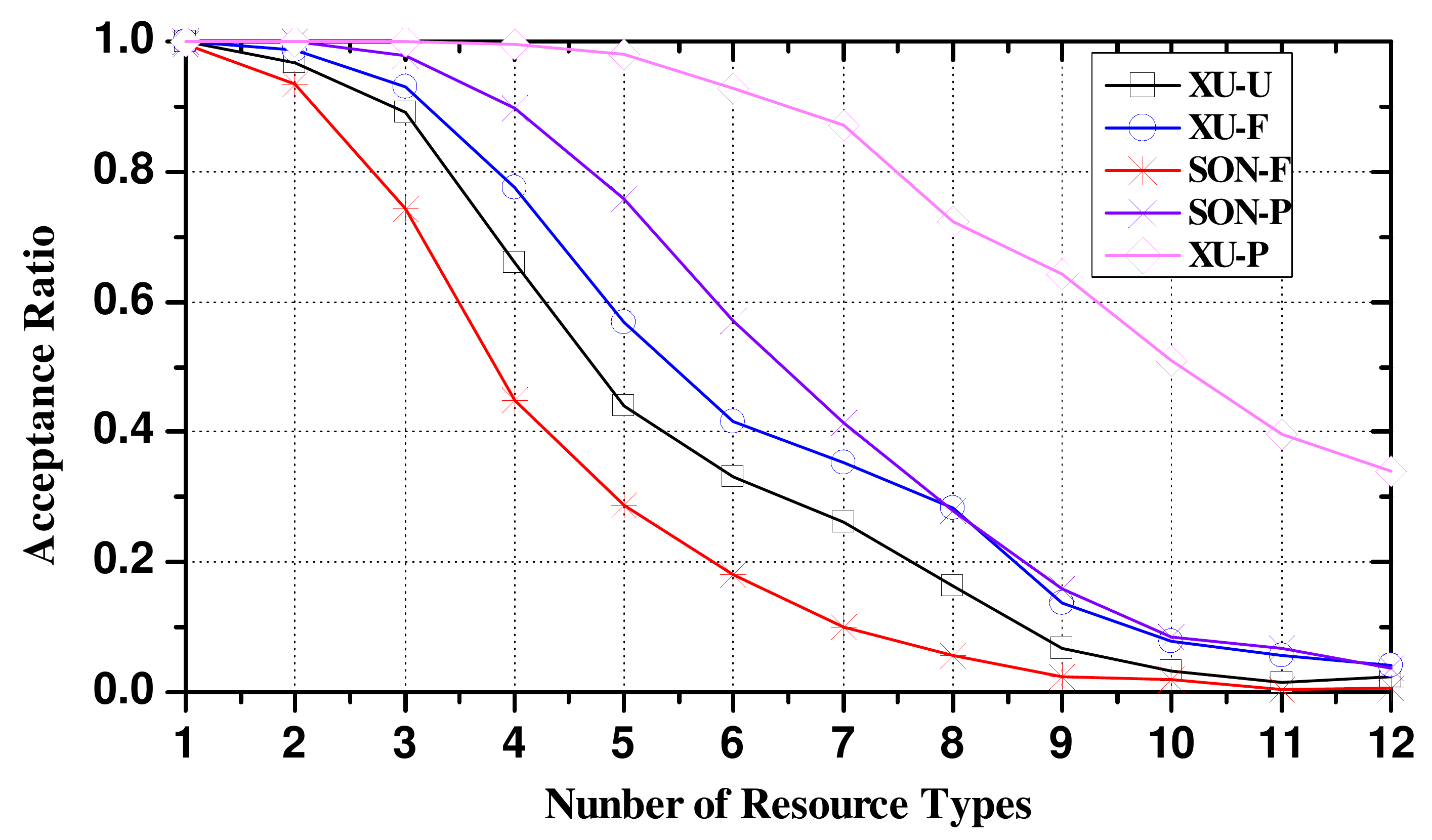}}
		\reduceblack
	\subfigure[Under different $\max_{\forall \tau_i}\{L_{i,q}\}$.]{\includegraphics[width=2in]{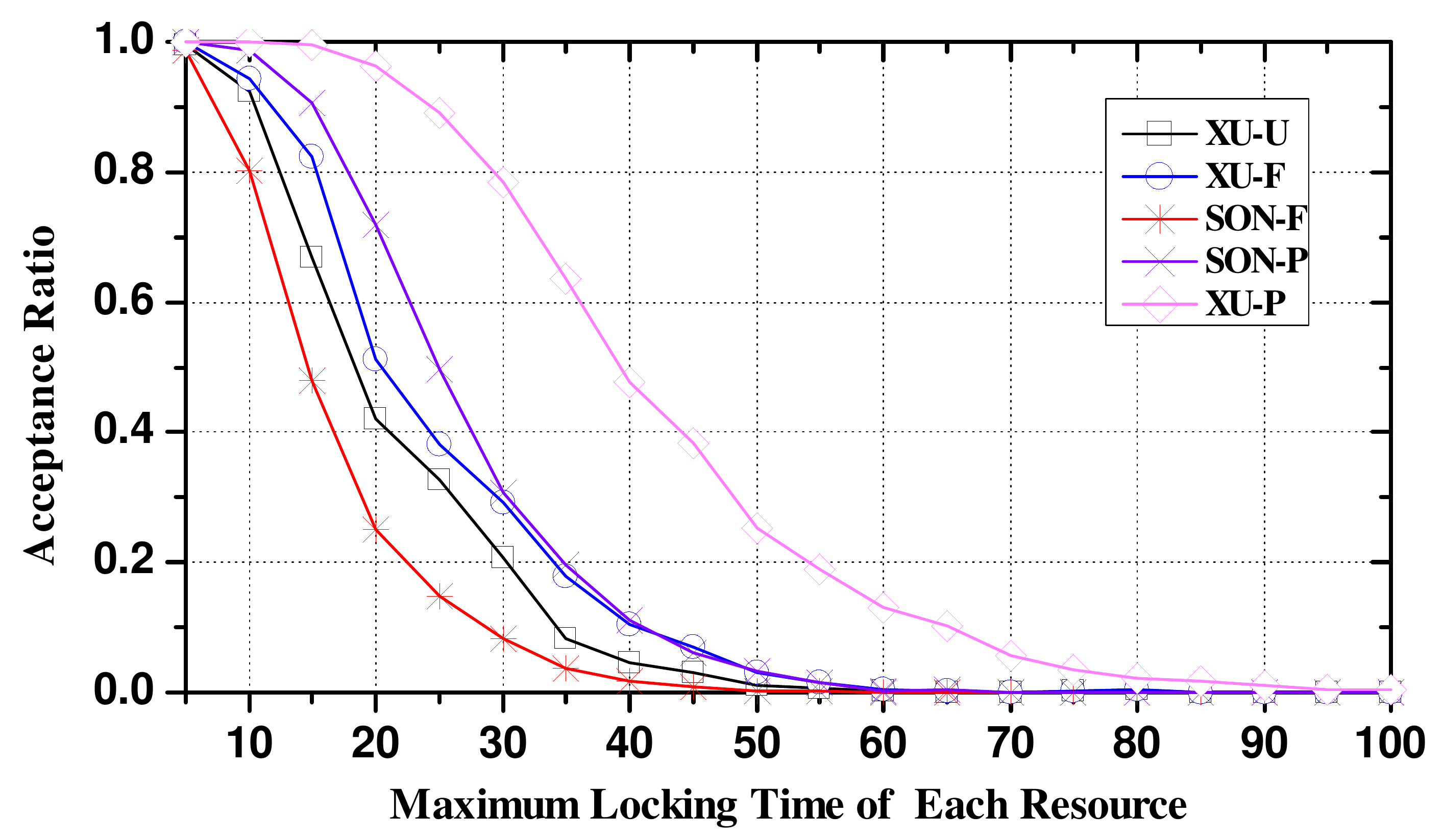}}
	\subfigure[Under different $n$.]{\includegraphics[width=2in]{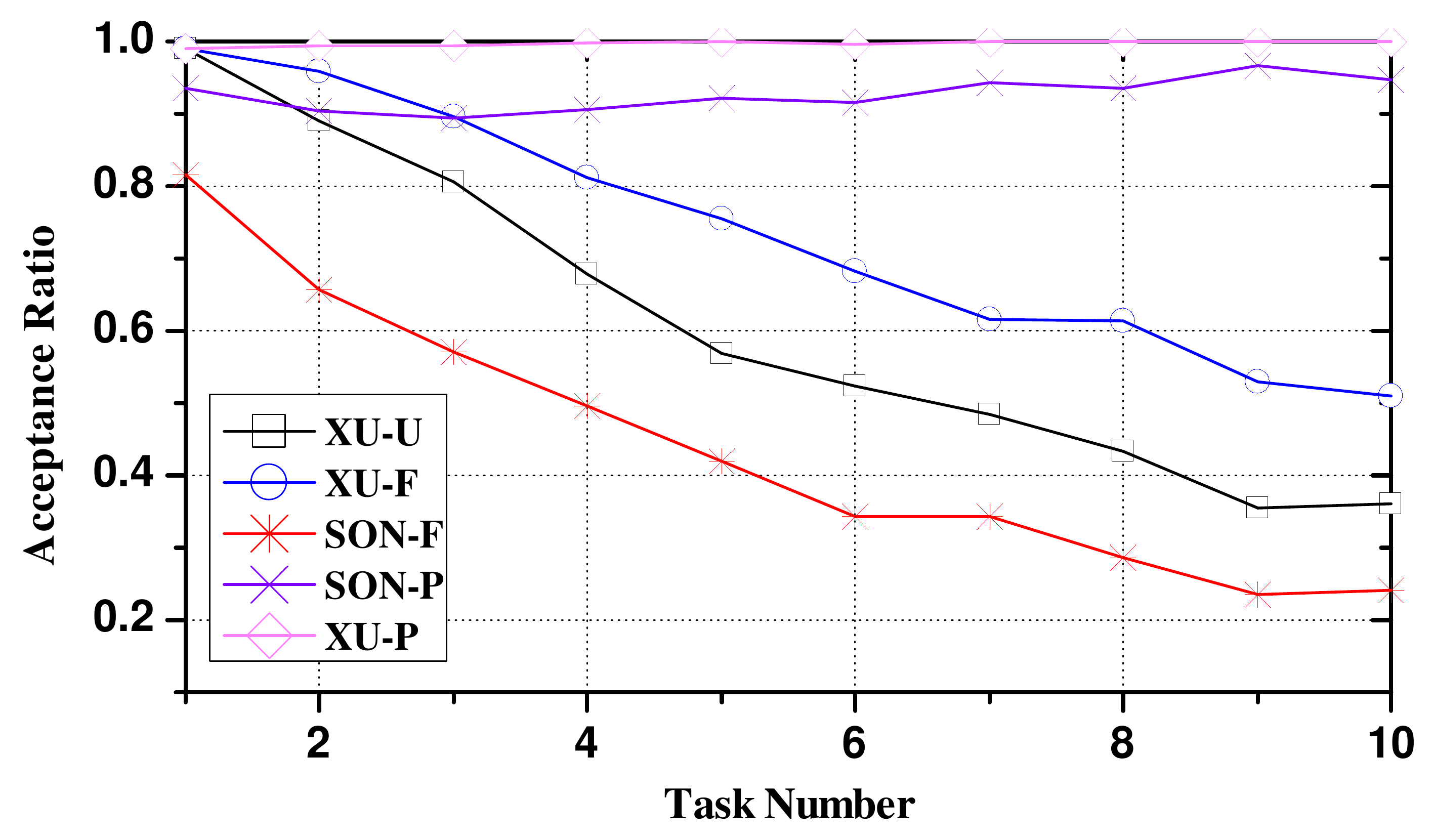}}
	\subfigure[Under realisitic OpenMP programs.]{\includegraphics[width=2in]{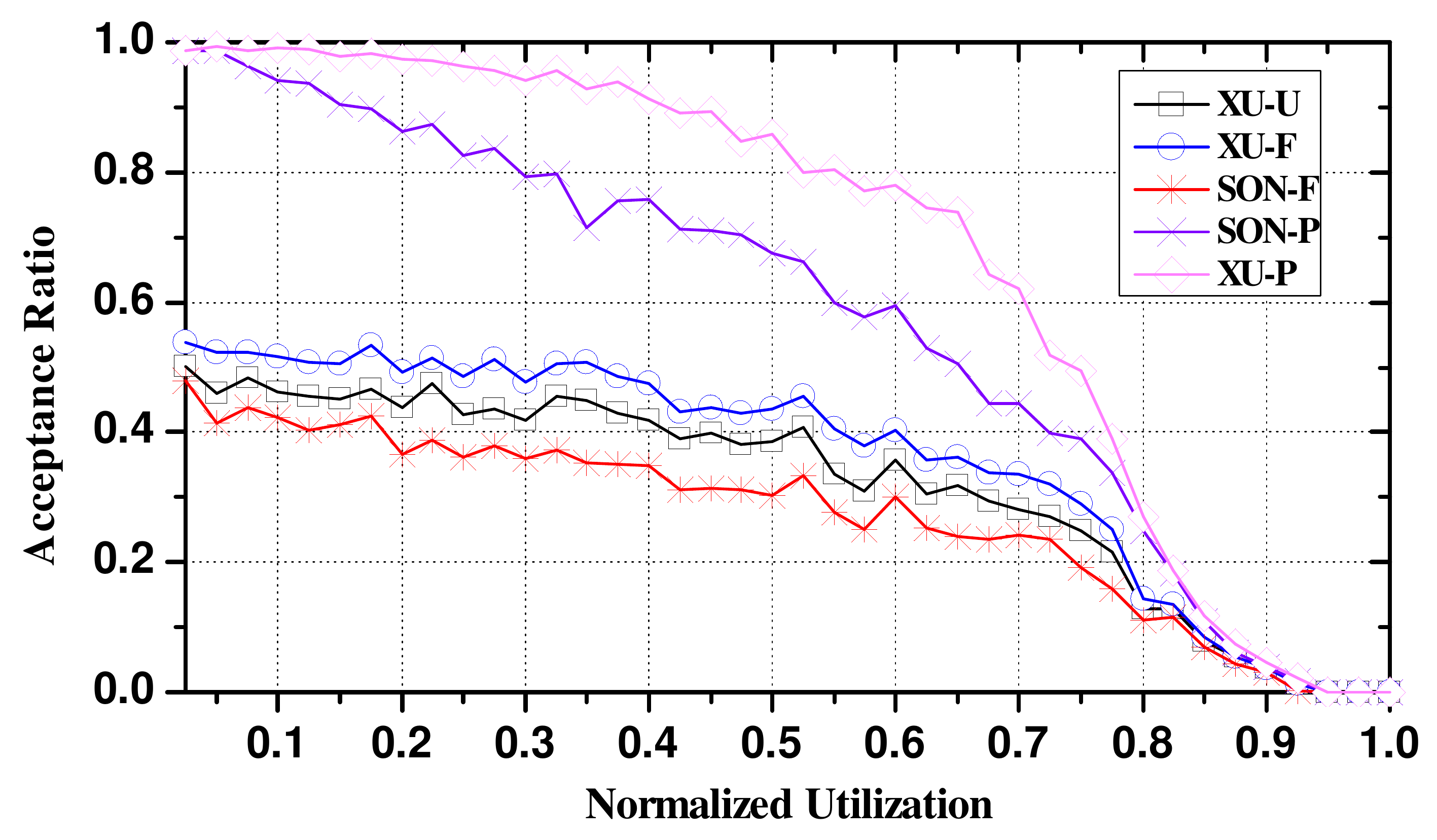}}
	\caption{Comparisons with the state-of-the-art.}
	\label{fig:sythetic}
\end{figure*}


We first compare the three approaches with randomly generated task systems. 
The DAG tasks are generated as follows:

\begin{itemize}
	\item \textbf{Task Graph $G_i=\langle V_i, E_i \rangle $:} The task graph of each task is generated using the Erd\"{o}s-R\'{e}nyi method $G(|V_i| , p)$ \cite{cordeiro2010random}. For each task, the number of vertices $|V_i|$ is randomly chosen in $[100, 400]$. The WCET of each vertex is randomly picked in  $[250, 600]$. The metrics of the number and WCETs of vertices are consistent with the measurement results in \cite{wang2017benchmarking}. For each possible edge we generate a random value in  $[0, 1]$ and add the edge to the graph only if the generated value is less than a predefined threshold $p=0.1$. The same as in \cite{saifullah2014parallel}, a minimum number of additional edges are added to make a task graph weakly connected.

	\item \textbf{Deadline and Period:} The deadline $D_i$ of each task $\tau_i$ is generated in a similar way with \cite{dinh2018blocking}: after $\critical_i$ is fixed, $D_i$ is generated according to a ratio between $\critical_i$ and $D_i$ randomly chosen in $\{0.125,0.25\}$. The period $T_i$ is set to be equal to $D_i$. 
	
	\item \textbf{Resource}:
The number of resource types is in the range $[1,12]$.
The number of accesses to each resource by all tasks $\sum_{\tau_i \in \tau}N_{i,q}$ is in the range $[16,1008]$, and is randomly distributed to different tasks.
The maximal locking time $\max_{\forall \tau_i}\{L_{i,q}\}$ of each resource is in the range $[5, 60]$ and each $L_{i,q}$ is randomly picked in $[1, \max_{\forall \tau_i}\{L_{i,q}\}]$.
\end{itemize}

%

Since we only focus on heavy tasks, a task with $U_i<1$ is discarded until a heavy task is generated during the generation of each task. For each task set, we generate $n$ tasks where $n$ is in $[1,14]$. The normalized utilization $U_{norm}$ (the ratio between the total utilization and the number of processors) of each task set is predefined, which will be explained in detail for the configuration of each figure. After we generate all tasks in a task set, we can compute the total utilization $U_{\sum}$, then we set the number of processors according to the formula $m=\lceil \frac{U_{\sum}}{U_{norm}} \rceil$. \todo{The number of processors could become quite large (far more than 10 processors) when $U_{\sum}$ is relatively low (e.g., lower than 0.2) or the number of tasks in a task set is large (e.g., more than 6 tasks).} For each configuration (corresponding to one point on the X-axis), we generate $1000$ task sets. 

In Figure \ref{fig:sythetic}.(a)-(e), we set a basic configuration and in each group of experiments vary one parameter while keeping others  unchanged.
The basic configuration is
as follows: $n=4$, $U_{norm}=0.5$, the number of resource types is 4, $\sum_{\tau_i \in \tau}N_{i,q}=256$ and $\max_{\forall \tau_i}\{L_{i,q}\}=15$.

Figure \ref{fig:sythetic}.(a) shows acceptance ratios of all tests under different normalized utilizations (X-axis). Figure \ref{fig:sythetic}.(b) evaluates the acceptance ratios under different $\sum_{\tau_i \in \tau}N_{i,q}$. We can observe that the acceptance ratios of all tests decrease as $\sum_{\tau_i \in \tau}N_{i,q}$ increases. Figure \ref{fig:sythetic}.(c) shows the acceptance ratios under different number of resource types. The acceptance ratios of all tests decrease as the number of resource types increases. In Figure \ref{fig:sythetic}.(d), resources are generated with different $\max_{\forall \tau_i}\{L_{i,q}\}$. The schedulability of all tests decreases as $\max_{\forall \tau_i}\{L_{i,q}\}$ increases. In Figure \ref{fig:sythetic}.(e), we generate different number of tasks in each configuration. The schedulability of XU-U, XU-F and SON-F decreases as the number of tasks increases whereas the schedulability of tests for priority order, i.e., XU-P and SON-P, is hardly affected by the number of tasks. 


From the above results we see that tests for priority-order perform better than those for FIFO-order and unordered, and our approaches consistently outperform the state-of-the-art under different parameter settings: XU-P outperforms SON-P and XU-F outperforms SON-F.
In particular, even if XU-U adopts less queue order information, it still consistently outperforms SON-F due to our new analysis techniques which systematically analyze the blocking time that may delay the finishing time of a parallel task and jointly consider the impact of blocking time to both the total workload and the longest path length.

\begin{figure*}[htb]
	\centering
	\subfigure[Under different $U_{norm}$.]{\includegraphics[width=2in]{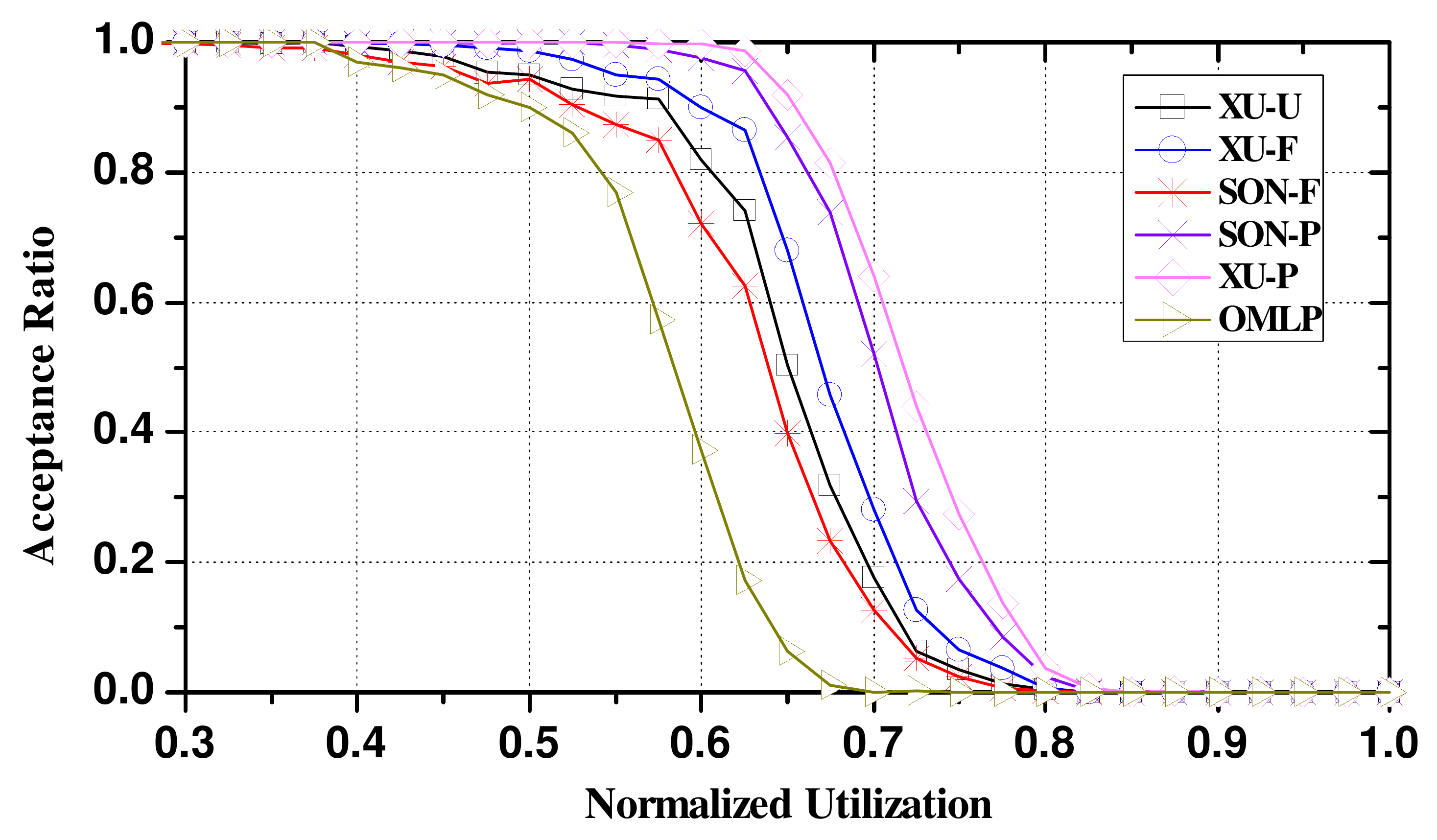}}
	\subfigure[Under different $n$.]{\includegraphics[width=2in]{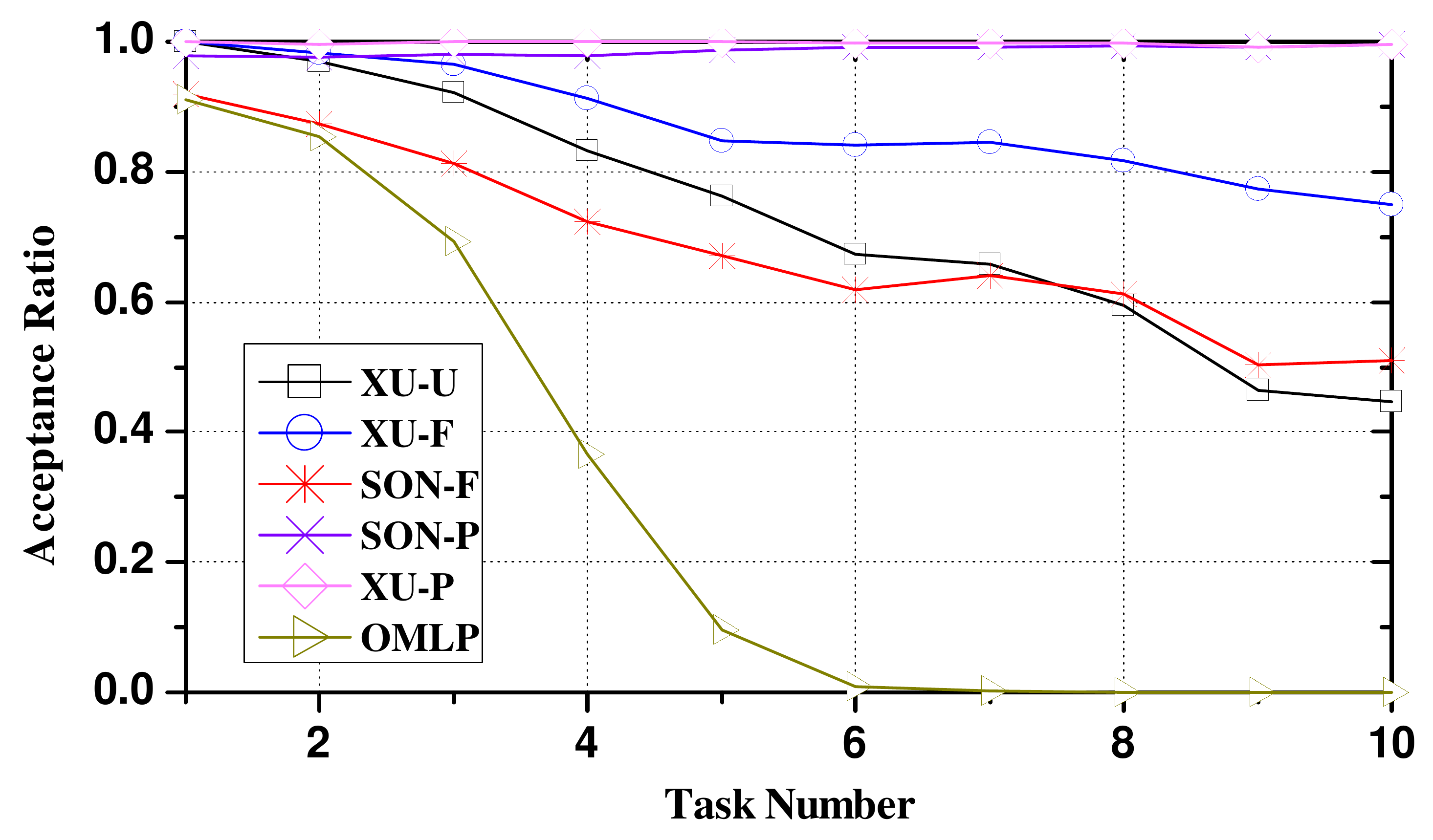}}
	\subfigure[Under different $\max_{\forall \tau_i}\{L_{i,q}\}$.]{\includegraphics[width=2in]{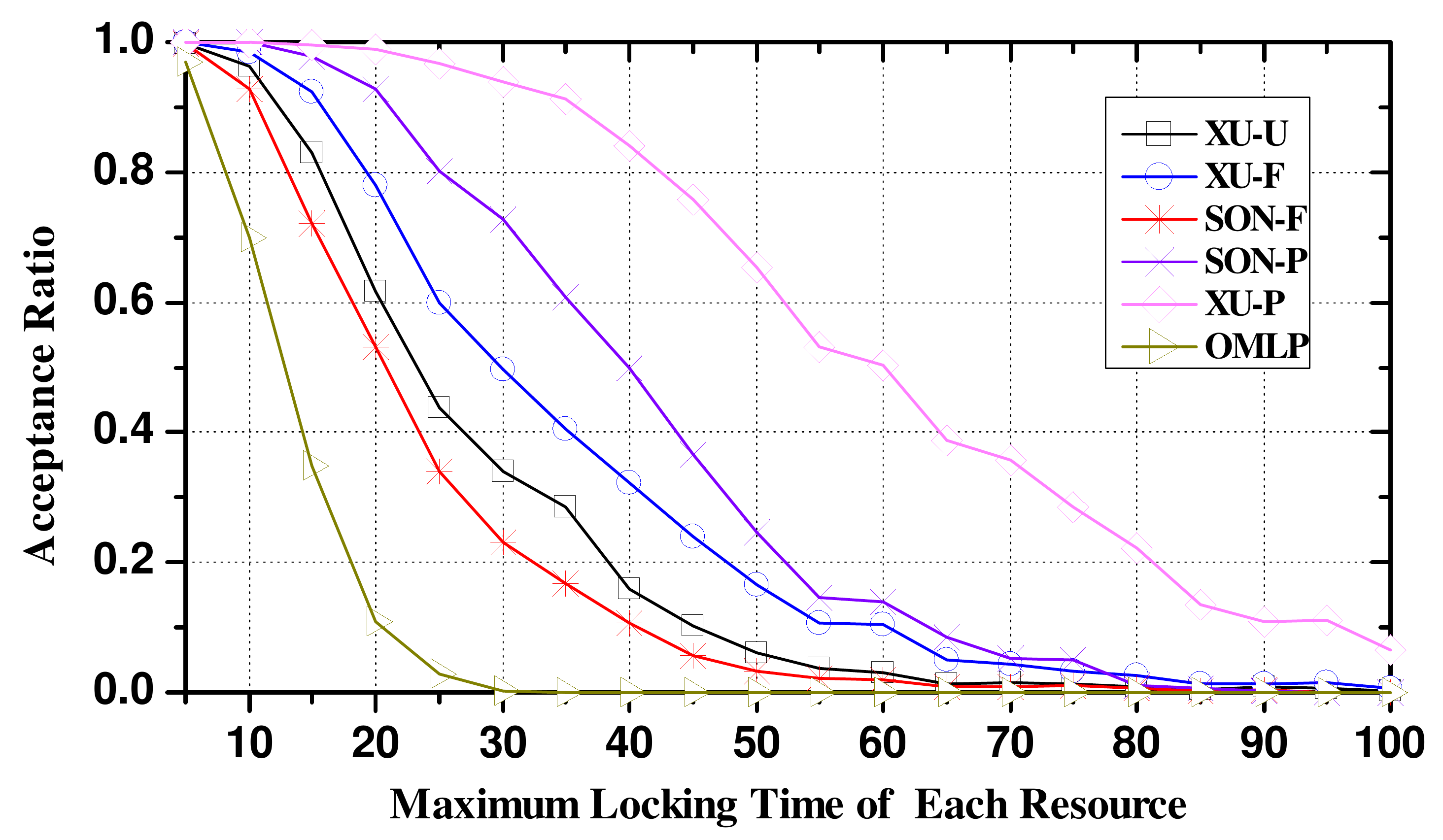}}
	\caption{Comparisons with OMLP.}
	\label{fig:omlp}
\end{figure*}

In the following, we conduct experiments to evaluate the performance of both \cite{dinh2018blocking} and our results in comparison with locking protocols for sequential tasks. That is, we try to find a straightforward way to extend locking protocols for sequential tasks to paralleled tasks, such that the points we make in Section \ref{s: sequentialdiscussion} can be more clear. Some modern analysis techniques for sequential tasks use \emph{Linear Programming} (LP) to achieve more precise performance, e.g., \cite{wieder13rtss,yang2015global}, which are not included due to the following reasons. First, the blocking times are defined under schedulability tests for sequential tasks which can not be directly applied for parallel tasks (some significant modifications and techniques are required and it is not trivial). Second, the LP-based techniques run with significant computing resources and time since they are with quite high complexity (weeks on clustered computers as provided by the authors of \cite{dinh2018blocking}) whereas our tests and \cite{dinh2018blocking} are polynomial. OMLP is a well-known locking protocol of clustered scheduling for sequential tasks \cite{brandenburg2013omlp} which is also the most relevant work with this paper (DAG tasks scheduled under federated scheduling can be regarded as sequential tasks scheduled on clusters). In Fig.\ref{fig:omlp}, we apply OMLP on DAG tasks in a straightforward manner where each vertex in a DAG task is regarded as an independent sequential task. We first randomly distribute the generated requests of each task to its vertices. The priorities of all vertices in a DAG task are set the same as their indexes, and a vertex with a smaller index has a higher priority. We first compute the S-oblivious PI-blocking for each vertex according to the blocking analysis techniques presented in \cite{brandenburg2013omlp} and then add the PI-blocking to the WCET of the vertex, after which the longest length among all paths and the WCET of all vertices of $\tau_i$ are denoted by $\critical_i^{'}$ and $\sumC_i^{'}$ respectively. Then we use the general schedulability test of federated scheduling for each DAG task \cite{li2014analysis}, i.e., the response time of task $\tau_i$ is computed by $R_i \leq \critical_i^{'}+\frac{\sumC_i^{'}-\critical_i^{'}}{m_i}$. The schedualbility of the task set is decided in a similar way with Algorithm \ref{al:secondmodelp} (the only difference is on the computation of $R_i$).

In Figure \ref{fig:omlp}.(a)-(c), we set a basic configuration and in each group of experiments vary one parameter while keeping others unchanged.
The basic configuration is
as follows: $n=4$, $U_{norm}=0.6$, the number of resource types is 9, $\sum_{\tau_i \in \tau}N_{i,q}=60$ and $\max_{\forall \tau_i}\{L_{i,q}\}=15$. In comparison with the basic configuration of Figure \ref{fig:sythetic}, we have significantly reduced the total number of resource accesses to evaluate the performance of our results in a system with a modicum number of accesses (the case that is more close to the practical scenarios). It may be noticed that both our work and \cite{dinh2018blocking} are based on the classic Graham's bound \cite{graham1969bounds}. Thus if there are no resource access contentions, the schedulabilities of our result and \cite{dinh2018blocking} are the same, and of course the gap of the performance between our results and \cite{dinh2018blocking} becomes more significant when there are more resource access contentions. From Figure \ref{fig:omlp}, we can observe that our results still outperform \cite{dinh2018blocking}. Moreover, even more concrete information are used (i.e., the exact distributions of requests), directly applying locking protocols and associated blocking analysis techniques for sequential tasks on DAG tasks is quite pessimistic (as discussed in Section \ref{s: sequentialdiscussion}).

\subsection{Realistic OpenMP Programs}\label{ss: benchmark}
In the following, we evaluate the three approaches with workload generated according to realistic OpenMP programs. OpenMP supports task parallelization since version 3.0 \cite{board2008openmp}, which can be modeled as DAG models \cite{wang2017benchmarking}. We
collect $8$ OpenMP programs (see Table. \ref{tb:benchmark}) using C language from different benchmark suits and transform them into DAG model.
We measure the $\sumC_i$ and $\critical_i$ of each program and $N_{i,q}$ and $L_{i,q}$ to each shared resource by each task on a hardware platform with Intel i7-7820HQ CPU@2.90GHz, cache size of 8MB and total memory of 4GB. The run time  compiling environment is Ubuntu 12.04.5 LTS with gcc 4.9.4. We consider $10$ different types of resources, where the first $4$ are shared data objects in the operating kernel accessed via system calls (i.e., time) or \todo{library calls (i.e., fprintf, printf, malloc)}. The remaining $6$ 
are shared data structures or non-reusable routines protected by \emph{\# pragma omp critical} in the OpenMP program.

The measurement results are summarized in Table \ref{tb:benchmark}, where the time unit is $\mu s$. 
Note that the measurement results are \emph{not} guaranteed to be safe upper bounds of the desired parameters. In order to obtain their safe upper bounds, a comprehensive static analysis covering all the hardware and software behaviors is required. In this paper, we simply use these results to approximately represent the workload characteristics of 
these OpenMP programs. \todo{It may be notices that the number of types of shared resources that each program may access is not large. The resources accessing behaviors of programs in bots-1.1.2 are similar because of that they are all commutative algorithms and may use some similar library calls such as "malloc". These features does not affect our evaluations, and the main purpose of our evaluations is to show the impact to the schedulability of realistic parallel programs with shared resources and the schedulabilities under different methods.}

\begin{table*}[!htb]\caption{Measurement results of OpenMP programs. }
	\centering
	\footnotesize
	\begin{tabular}{ |l@{}|l@{}|l@{}|l@{}|l@{}| l@{}|l@{} | l@{}|l@{} |l@{}|l@{}| l@{}|l@{}| l@{}|l@{}| l@{}|l@{}| l@{}|l@{} |l@{}|l@{} |l@{}|l@{} |l@{}|}
		\hline		
		\multirow{2}{*}{Benchmark} & \multirow{2}{*}{Application} & \multirow{2}{*}{$\sumC_i$}  &\multirow{2}{*}{ $\critical_i$} & \multicolumn{2}{c|}{$\resource_0$}&\multicolumn{2}{c|}{$\resource_1$}&\multicolumn{2}{c|}{$\resource_2$}&\multicolumn{2}{c|}{$\resource_3$}&\multicolumn{2}{c|}{$\resource_4$}&\multicolumn{2}{c|}{$\resource_5$}&\multicolumn{2}{c|}{$\resource_6$}&\multicolumn{2}{c|}{$\resource_7$}&\multicolumn{2}{c|}{$\resource_8$}&\multicolumn{2}{c|}{$\resource_9$} \\
		\cline{5-24}
		&&&& $N$ &$L$ & $N$ &$L$& $N$& $L$& $N$& $L$& $N$& $L$& $N$&$ L$& $N$& $L$& $N$& $L$& $N$& $L$& $N$& $L$\\ \hline
		\multirow{8}{*}{bots-1.1.2 \cite{duran2009barcelona}}
		& alignment.for& 313168 & 11446& 22& 2& 1 &2& 2& 2& 0& 0& 0 &0& 0& 0& 0& 0& 0 &0& 0& 0& 0& 0 \\
		& alignment.single& 315981  & 9980& 22& 2& 1& 2& 2& 2& 0& 0& 0 &0& 0& 0& 0& 0& 0& 0& 0& 0& 0& 0  \\
		& fft & 274 & 58& 21 &2& 1 &4& 2 &2& 0& 0& 0 &0& 0 &0& 0 &0& 0 &0& 0 &0& 0& 0  \\
		& fib & 353  & 20 & 20& 2& 0& 0& 2& 2& 0& 0& 0 &0& 0 &0& 0 &0& 0 &0& 0& 0& 0 &0 \\
		& sort & 1757 & 217 & 20& 2& 2& 4& 2& 2& 0& 0& 0& 0& 0& 0& 0 &0& 0& 0& 0& 0& 0& 0 \\
		& floorplan& 5843&92& 36 &2& 6 &1& 2& 2& 0 &0& 4& 1& 0& 0& 0 &0& 0 &0& 0& 0& 0& 0  \\
		\hline
		\multirow{2}{*}{OpenMPMicro \cite{dimakopoulos2008microbenchmark}} 
		& MatrixMultiplication & 5873246 & 106983 & 0& 0& 3 &7& 0 &0& 5& 4& 0 &0& 0 &0& 0 &0& 0& 0& 0 &0& 0& 0  \\
		& Square & 50000812 & 1000066& 0& 0& 0 &0& 0& 0& 0& 0& 0 &0& 20 &5& 50& 1& 50 &105& 50 &79& 50& 1  \\
		\hline	
	\end{tabular}
	\label{tb:benchmark}
\end{table*}

For each task set, we pick $n$ programs (each being a DAG task) in Table \ref{tb:benchmark}, where $n$ is randomly chosen in $[2,5]$. The deadline $D_i$ of each task 
and the number of processors in the system are set in the same way as Section \ref{ss: sythetic}.

Fig. \ref{fig:sythetic}.(f) shows acceptance ratios of all tests under different normalized utilizations (X-axis).
 We can observe that the acceptance ratios of all tests decrease in comparison with Fig. \ref{fig:sythetic}.(a). This is because some applications have relatively short deadlines and periods by our task generation method, and thus have low tolerance 
to blocking time caused by other tasks.
Nevertheless, the results have the same trend as in Fig. \ref{fig:sythetic}.(a): XU-F consistently outperforms SON-F while XU-P outperforms SON-P. 


	\section{Related Work} \label{s:related}

There is plentiful of literature on scheduling algorithms and analysis techniques for the parallel real time tasks \cite{li2014analysis,maia2014response,jiang2016decomposition,fonseca2017improved,jiang2017semi}, which all assume tasks to be independent from each other and do not consider the locking issue. 

Real-time locking protocols are well supported in uniprocessor systems. The Priority Inheritance Protocol (PIP) \cite{sha1990priority} is the first solution to address the priority inversion problem. There are several optimal protocols for uniprocessor real-time task systems, such as Multiprocessor Stack Resource Policy (SRP) \cite{baker1991stack} and Priority Ceiling Protocol (PCP) \cite{sha1990priority} which guarantee bounded blocking time for a single resource access request and ensure deadlock freedom. 
 
On multiprocessors, there are two major lock types: spin locks and suspension-based semaphores. Much
work has been done for partitioned multiprocessor scheduling, such as MPCP \cite{rajkumar1990real} and  DPCP \cite{rajkumar1988real} and the Multiprocessor Stack Resource Policy (MSRP) \cite{gai2001minimizing}. The Flexible Multiprocessor Locking Protocol (FMLP) \cite{block2007flexible} is a family of locking protocols which support both global and partitioned scheduling. The Parallel Priority Ceiling Protocol (P-PCP) \cite{easwaran2009resource} is an extension of the PIP that attempts to avoid certain unfavorable
blocking situations.  The family of $O(m)$ Locking Protocols (OMLP) \cite{brandenburg2010optimality, brandenburg2013omlp} is a suite of suspension-based locking protocols that have proved to be asymptotically optimal under suspension-oblivious analysis.  Lakshmanan et al. \cite{lakshmanan2009coordinated} proposed the Multiprocessor Priority Ceiling Protocol with virtual spinning and Faggioli et al. \cite{faggioli2010multiprocessor} proposed a locking protocol for reservation-based schedulers that includes preemptable spinning.
 
A recent work considering locks for parallel real-time task model is \cite{dinh2018blocking}, which adopts the federated scheduling framework with spin locks. As mentioned before, \cite{dinh2018blocking} analyzes the impact of the blocking time to the total workload and the longest path length separately, which leads to significant pessimism in analysis precision. The contribution of this paper is to address this 
 pessimism in \cite{dinh2018blocking}.
 
 The locking protocols have been studied with other graph-based task models, such as the DRT model \cite{guan2011resource} and multi-frame task model\cite{ekberg2015optimal}. However, 
 these models are still sequential (multiple edges going out from a vertex have conditional branching semantics rather than forking).
 
 \reduceblack
  \reduceblackhalf
%
%
%
%
%

	\section{CONCLUSIONS} \label{s:conclusion}
We study the analysis of parallel real-time tasks with spin locks in three different orders under federated scheduling. A recent work \cite{dinh2018blocking} developed analysis techniques for this problem, which are pessimistic since all blocking time are assumed to delay the finishing time of a parallel task and the blocking time to the total workload and the longest path length of each task is analyzed separately. In this paper, we develop new schedulability and blocking analysis techniques to improve the analysis precision. In our future work, we will investigate blocking analysis on other (finer-grained) models.


	\ifCLASSOPTIONcaptionsoff
	\newpage
	\fi
	
	
	
	
	%
	\bibliographystyle{IEEEtran}
	\bibliography{IEEEabrv,ADA1}

	\begin{IEEEbiography}[{\includegraphics[width=1in,height=1.25in,clip,keepaspectratio]
			{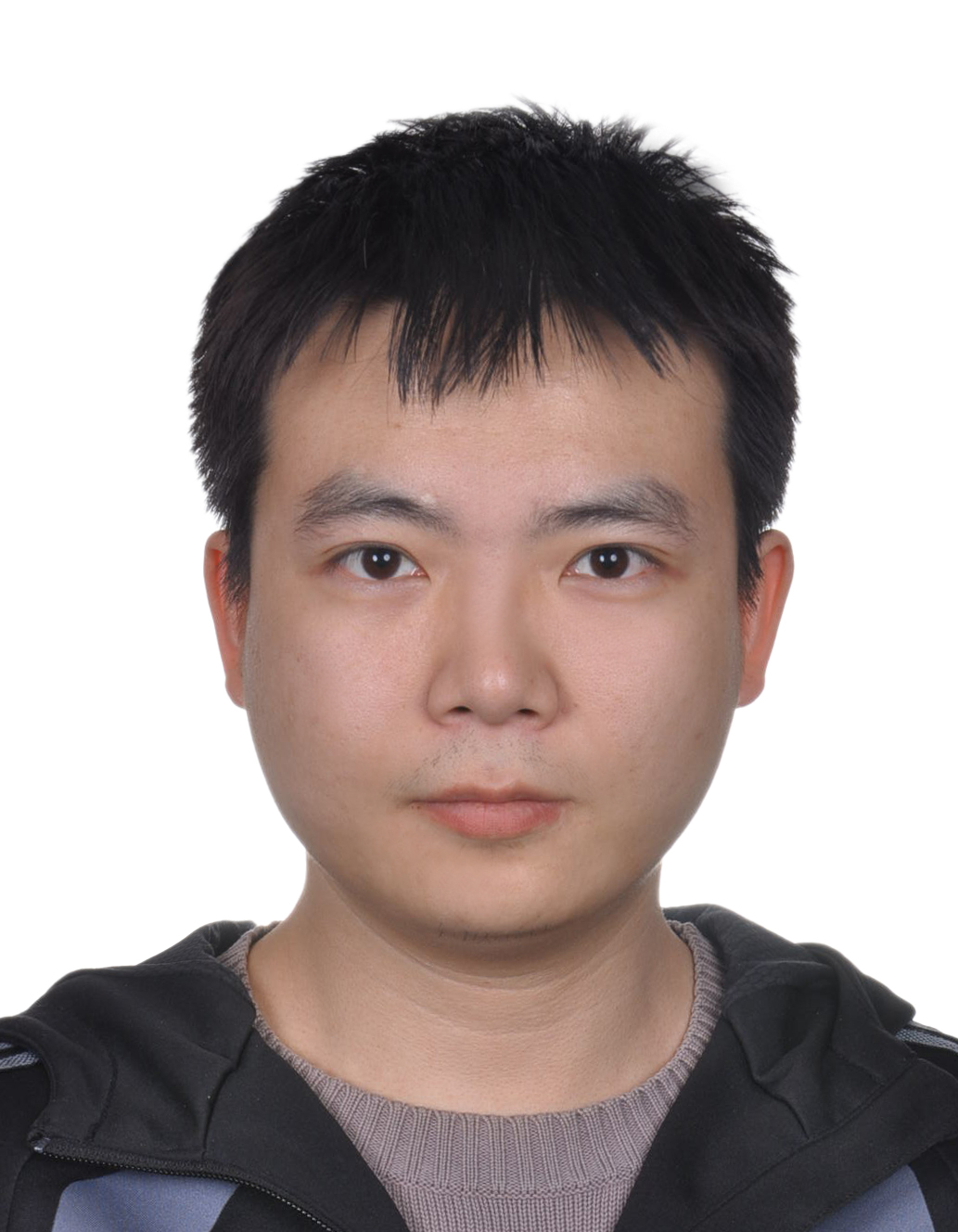}}]{Xu Jiang} has received his BS degree in
		computer science from Northwestern Polytechnical University, China in 2009, received the MS
	degree in computer architecture from Graduate School of the Second Research Institute of
China Aerospace Science and Industry Corporation, China in 2012, and PhD from Beihang University, China in 2018. Currently, he is working in Northeastern University, China. His research interests include real-time systems, parallel and distributed
systems and embedded systems.
	\end{IEEEbiography}
	
	\begin{IEEEbiography}[{\includegraphics[width=1in,height=1.25in,clip,keepaspectratio]
			{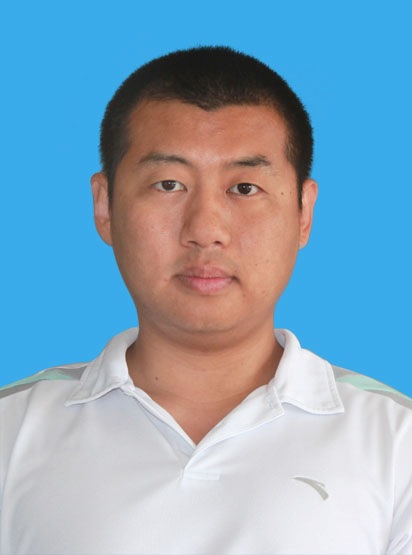}}]{Nan Guan}
		is currently an assistant professor at the Department of Computing, The Hong Kong Polytechnic University. Dr Guan received his BE and MS from Northeastern University, China in 2003 and 2006 respectively, and a PhD from Uppsala University, Sweden in 2013. Before joining PolyU in 2015, he worked as a faculty member in Northeastern University, China. His research interests include real-time embedded systems and cyber-physical systems. He received the EDAA Outstanding Dissertation Award in 2014, the Best Paper Award of IEEE Real-time Systems Symposium (RTSS) in 2009, the Best Paper Award of Conference on Design Automation and Test in Europe (DATE) in 2013.	
	\end{IEEEbiography}
	
	\begin{IEEEbiography}[{\includegraphics[width=1in,height=1.25in,clip,keepaspectratio]
			{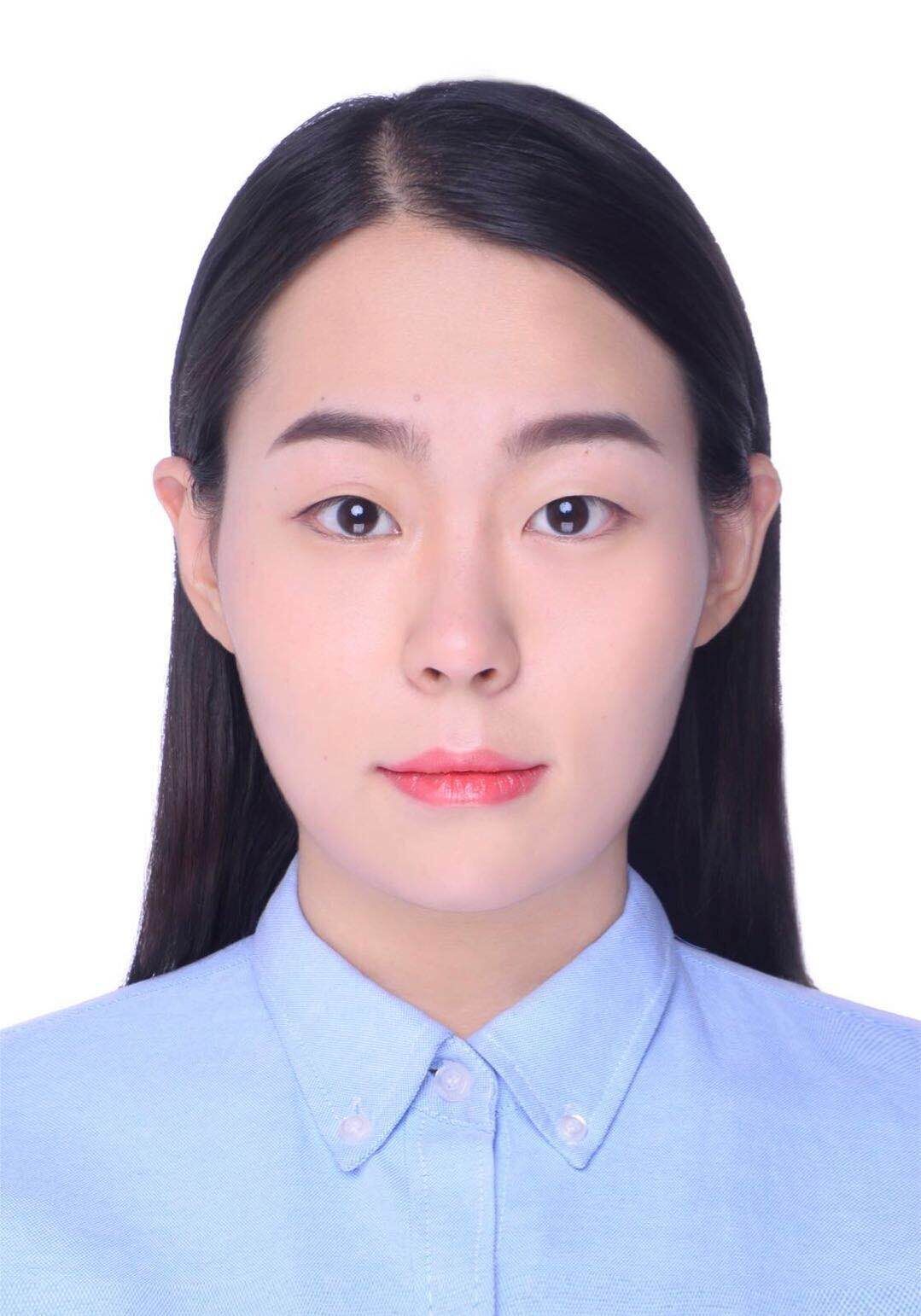}}]{He Du}
		is currently a Ph.D. candidate at School of Computer Science and Engineering, Northeastern University. She received the Bachelor degree from Northeastern University, Shenyang, China, in 2015. Her research interests focus on parallelism program analyze and multiprocessor real-time scheduling.	
	\end{IEEEbiography}

	\begin{IEEEbiography}[{\includegraphics[width=1in,height=1.25in,clip,keepaspectratio]
			{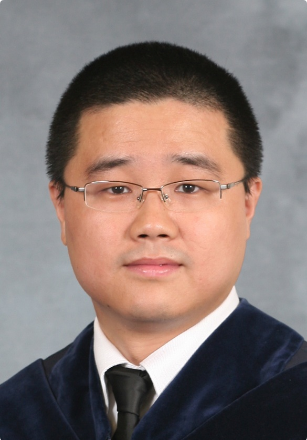}}]{Weichen Liu} received the B.Eng.
		and M.Eng. degrees from the Harbin Institute of
		Technology, Harbin, China, and the Ph.D. degree
		from the Hong Kong University of Science and
		Technology, Hong Kong.
		He is an Assistant Professor with the School
		of Computer Science and Engineering, Nanyang
		Technological University, Singapore. He has
		authored and co-authored over 70 publications in
		peer-reviewed journals, conferences, and books. His
		current research interests include embedded and
		real-time systems, multiprocessor systems, and network-on-chip.
		Dr. Liu was a recipient of the Best Paper Candidate Awards from
		ASP-DAC 2016, CASES 2015, and CODES+ISSS 2009, the Best Poster
		Awards from RTCSA 2017 and AMD-TFE 2010, and the most popular
		Poster Award from ASP-DAC 2017.
	\end{IEEEbiography}
	
	\begin{IEEEbiography}[{\includegraphics[width=1in,height=1.25in,clip,keepaspectratio]
			{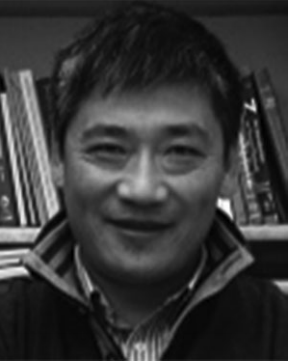}}]{Wang Yi} received the PhD in computer science
		from Chalmers University of Technology, Sweden,
		in 1991. He is a chair professor with Uppsala
		University. His interests include models, algorithms
		and software tools for building and analyzing
		computer systems in a systematic manner to
		ensure predictable behaviors. He was awarded
		with the CAV 2013 Award for contributions to
		model checking of real-time systems, in particular
		the development of UPPAAL, the foremost tool
		suite for automated analysis and verification of
		real-time systems. For contributions to real-time systems, he received
		Best Paper Awards of RTSS 2015, ECRTS 2015, DATE 2013 and
		RTSS 2009, Outstanding Paper Award of ECRTS 2012 and Best Tool
		Paper Award of ETAPS 2002. He is on the steering committee of
		ESWEEK, the annual joint event for major conferences in embedded
		systems areas. He is also on the steering committees of ACM EMSOFT
		(co-chair), ACM LCTES, and FORMATS. He serves frequently on Technical
		Program Committees for a large number of conferences, and was
		the TPC chair of TACAS 2001, FORMATS 2005, EMSOFT 2006, HSCC
		2011, LCTES 2012 and track/topic Chair for RTSS 2008 and DATE
		2012-2014. He is a member of Academy of Europe (Section of Informatics)
		and a fellow of the IEEE.
	\end{IEEEbiography}

	
	
	
\end{document}